\documentclass[10pt]{article} % For LaTeX2e
\usepackage[preprint]{tmlr}

\usepackage{amsmath,amsfonts,bm}

\def\eqref#1{equation~\ref{#1}}
\def\1{\bm{1}}

\DeclareMathAlphabet{\mathsfit}{\encodingdefault}{\sfdefault}{m}{sl}
\SetMathAlphabet{\mathsfit}{bold}{\encodingdefault}{\sfdefault}{bx}{n}

\usepackage{float}

\usepackage{hyperref}
\usepackage{url}
\usepackage{times}  % DO NOT CHANGE THIS
\usepackage{helvet}  % DO NOT CHANGE THIS
\usepackage{courier}  % DO NOT CHANGE THIS
\usepackage{graphicx} 
\usepackage{subfigure}
\usepackage{algorithm}
\usepackage{algorithmic}
\usepackage{amsfonts}
\usepackage{amsthm}
\usepackage{xcolor}

\newtheorem{assumption}{Assumption}
\newtheorem{proposition}{Proposition}

\usepackage{tikz}
\usetikzlibrary{arrows.meta}
\usepackage{booktabs} 
\title{Balanced Twins: Causal Inference on Time Series with Hidden Confounding
}

\author{\name Maha Ouali \email maha.ouali@edf.fr \\
      \addr Aix Marseille University\\
      EDF R\&D
      \AND
      \name Badih Ghattas \email badih.ghattas@univ-amu.fr \\
      \addr Aix Marseille University
      \AND
      \name Emmanuelle Flachaire \email emmanuel.flachaire@univ-amu.fr\\
      \addr  Aix Marseille University
      \AND
      \name Philippe Charpentier \email philippe.charpentier@edf.fr\\
      \addr  EDF R\&D \AND
      \name  Laurent Bozzi \email laurent.bozzi@edf.fr\\
      \addr  EDF R\&D}

\def\month{MM}  % Insert correct month for camera-ready version
\def\year{YYYY} % Insert correct year for camera-ready version
\def\openreview{\url{https://openreview.net/forum?id=XXXX}} % Insert correct link to OpenReview for camera-ready version

\begin{document}

\maketitle

\begin{abstract}
Accurately estimating treatment effects in time series is essential for evaluating interventions in real-world applications, especially when treatment assignment is biased by unobserved factors. In many practical settings, interventions are adopted at different times across individuals, leading to staggered treatment exposure and heterogeneous pre-treatment histories. In such cases, aggregating outcome trajectories across treated units is ill-defined, making individual treatment effect (ITE) estimation a prerequisite for reliable causal inference. We therefore study the problem of estimating the average treatment effect for the treated (ATT) by first recovering individual-level counterfactuals.
We introduce a neural framework that learns simultaneously low-dimensional latent representations of individual time series and propensity scores. These estimates are then used to approximate the individual treatment effects through a flexible matching procedure that avoids classical convexity constraints commonly used in synthetic control methods. By operating at the individual level, our approach naturally accommodates staggered interventions and improves counterfactual estimation under latent bias, without relying on explicit temporal modeling assumptions.
We illustrate our approach on both real-world energy consumption data and clinical time series, including high-frequency electricity demand-response programs and semi-synthetic data for individuals in intensive care unit (ICU), where hidden confounding, staggered treatment adoption, and non-stationary dynamics are prevalent.
\end{abstract}

\section{Introduction}
Balancing electricity demand and supply has become an increasing challenge for modern energy grids, especially during peak consumption periods. Demand-Response (DR) programs are designed to reduce or shift household electricity usage to alleviate grid stress. However, evaluating their causal impact is difficult due to the complexity of smart meter data and the influence of hidden confounders, such as lifestyle, ecological awareness, and financial constraints, which affect both participation and energy usage.

A significant source of bias arises from the voluntary nature of DR program enrollment. Participants often differ systematically from non-participants, leading to selection bias. Additionally, participation occurs at staggered times among individuals, complicating the estimation of treatment effects due to timing variations. Standard causal inference methods, including matching, regression adjustment, and time series models, typically rely on strong assumptions, such as the absence of unmeasured confounding or parallel trends. Unfortunately, these assumptions are often violated in real-world scenarios.\\
{A further challenge is that outcome dynamics are often \emph{non-stationary}: the relationship between past and future behavior may change over time. In electricity consumption, such changes may occur \emph{at} treatment time, \emph{after} treatment time, or independently of treatment timing. For example, participation in a demand-response program may coincide with a behavioral adaptation, such as shifting appliance usage or changing heating schedules. More generally, consumption patterns may evolve later due to seasonal transitions, temperature changes, policy signals, or the adoption of new equipment such as electric vehicles, heat pumps, or solar panels. These changes are not necessarily synchronized across individuals and may depend on latent factors such as flexibility, financial capacity, or ecological awareness. As a result, two households with similar pre-treatment consumption can exhibit very different post-treatment counterfactual trajectories. In such settings, methods that rely on extrapolating future outcomes from past observations may produce biased or unstable counterfactual estimates. Similar issues arise in healthcare, where patient trajectories may change because of disease progression, changes in monitoring intensity, ICU transfer, or treatment-associated care conditions, so that post-treatment outcomes no longer follow the same regime as pre-treatment observations.}

To address these challenges, we propose a neural framework, Balanced-Twin (B-Twin), for estimating individual treatment effects (ITE) from time series data while accounting for hidden confounders. Our objective is to aggregate these ITEs to accurately determine the average treatment effect for the treated (ATT). The key idea is to learn low-dimensional latent representations of individuals from pre-treatment outcomes and jointly estimate individual propensity scores. These quantities are then used to construct synthetic counterfactuals through a flexible matching procedure, {avoiding direct extrapolation of post-treatment dynamics from pre-treatment data}.

Importantly, our approach is explicitly grounded in the \emph{synthetic control framework}. Like classical synthetic control methods, B-Twin produces counterfactual outcomes as weighted combinations of observed units, preserving a clear parametric structure and interpretability. Unlike black-box neural approaches that directly regress outcomes or treatment effects, our method yields explicit balancing weights that quantify each control unit’s contribution to the counterfactual, making the estimation process transparent and inspectable. At the same time, by learning the matching structure in a latent space, B-Twin relaxes restrictive convexity constraints and remains effective in high-dimensional and noisy settings.

Beyond interpretability, our method is computationally efficient and scalable. By leveraging batch-based weight estimation, B-Twin can handle large datasets without requiring expensive per-unit optimization, a limitation of many synthetic control variants.

We evaluate our approach through extensive experiments on synthetic, semi-synthetic, and real-world datasets from both electricity consumption and healthcare domains. Our results show that classical methods such as synthetic control~\citep{abadie2010synthetic} and SyncTwin~\citep{synctwin} struggle in the presence of noise and high-dimensional individual data, while neural black-box approaches such as DragonNet~\citep{dragonnet} are sensitive to temporal shifts in the data-generating process. {In contrast, B-Twin provides more stable and accurate counterfactual estimates in settings where hidden confounding and non-stationary dynamics make extrapolation unreliable.}
In summary, our contributions are threefold: (1) an interpretable synthetic-control-inspired framework that jointly learns latent representations and propensity scores for robust ATT estimation under hidden confounding {and non stationary dynamics}; (2) extensive empirical validation on challenging and realistic benchmarks; and (3) practical insights into the limitations of existing causal inference methods when treatment assignment is influenced by unobserved factors and {outcomes extrapolation may be unreliable}.

\section{Related Work}

Estimating treatment effects from observational data is a central problem in causal inference, particularly when outcomes are time-dependent and treatment assignment is influenced by unobserved confounders. Existing approaches can be broadly categorized into traditional econometric methods, learning-based models, and recent temporal or latent-variable frameworks.

\subsection{Traditional Methods}
Classical econometric techniques such as Difference-in-Differences (DiD)
(\citealt{angrist2009mostly, ashenfelter1984using, card1990impact, Bertrand_Duflo_Mullainathan_2004, abadie2005semiparametric})
are widely used for policy evaluation in panel data settings. However, they rely on the strong parallel trends assumption, which is often violated in heterogeneous and noisy real-world environments \citep{gibson2021measuring}. Ordinary Least Squares (OLS) regression
\citep{rubin1974estimating, angrist2009mostly}
is simple and interpretable but assumes that all confounders are observed, making it vulnerable to hidden bias. Propensity Score Matching (PSM) \citep{caliendo2008some} aims to balance treated and control units but similarly assumes no hidden confounding and sufficient overlap.

Synthetic Control Methods (SCM)
(\citealt{abadie2003economic, abadie2010synthetic, klossner2018comparative, abadie2021penalized})
extend matching by constructing weighted combinations of control units to approximate counterfactual outcomes. While SCM can implicitly account for latent confounders under strong factor-model assumptions, it is primarily designed for aggregate units and becomes computationally expensive and unstable at the individual level, especially with high-frequency time series. Synthetic Difference-in-Differences (SDID)
\citep{arkhangelsky2021synthetic, clarke2023synthetic}
improves balance across both units and time, but remains sensitive to numerical instability and scaling issues when applied to large populations with fine-grained temporal data.

\subsection{Learning-Based Approaches}
To overcome the limitations of classical methods, several neural approaches have been proposed for treatment effect estimation. Treatment-Agnostic Representation Networks (TARNet) and Counterfactual Regression Networks (CFRNet) \citep{tarnet} learn representations to predict potential outcomes, with CFRNet enforcing balance between treated and control groups. Balancing Neural Networks (BNN) \citep{bnn} extend this idea using discrepancy-based regularization. However, these methods assume that all confounders are observed and are not designed to address hidden confounding.

DragonNet \citep{dragonnet} and BCAUSS \citep{bcauss} augment TARNet with an explicit propensity score estimation head to better structure the latent space, with BCAUSS further addressing violations of the positivity assumption. Despite their effectiveness in cross-sectional settings with observed covariates, these models remain ill-suited for our problem. When applied to time series data, they typically generate counterfactuals by directly regressing outcomes on treatment indicators and pre-treatment trajectories, resulting in black-box predictors that are difficult to interpret and sensitive to non-stationarity in the data-generating process. In contrast, our goal is to remain within a synthetic control–style framework, where treatment effects are identified through explicit weighting schemes that can be directly inspected and understood by domain experts.

Causal Effect Variational Autoencoder (CEVAE) \citep{cevae} addresses hidden confounding by leveraging proxy covariates within a latent-variable framework, while GANITE \citep{ganite} uses adversarial training to estimate individualized treatment effects under multiple treatments. Both approaches, however, rely on the availability of informative covariates or proxies and are not tailored to high-frequency time series settings where confounding information is embedded in outcome dynamics rather than observed features.

\subsection{Temporal and Latent Models}
Recent work has focused on treatment effect estimation under temporal confounding, where treatment assignment evolves over time like in
(\citealt{bica2020time, liu2020estimating, cao2023estimating}).
These methods address dynamic and multiple treatment regimes, which introduce additional complexity. In contrast, our setting considers a point intervention with staggered adoption, avoiding temporal confounding while still posing challenges due to hidden confounders and non-stationary outcomes. As a result, dynamic treatment models may be unnecessarily complex for our problem.

SyncTwin \citep{synctwin} is closely related to our work and proposes learning latent temporal embeddings to generate synthetic counterfactuals under staggered interventions. While effective for low-frequency longitudinal data, such as clinical records, its reliance on LSTM-based architectures limits scalability and stability in high-dimensional, noisy time series, such as smart meter data.

\subsection{Our Contribution}
We propose a two-phase framework that bridges classical synthetic control methods and neural representation learning for treatment effect estimation under hidden confounding. First, we learn low-dimensional latent representations from pre-treatment outcome trajectories and treatment assignment, {under the assumption that treatment-relevant hidden confounders leave recoverable signatures in pre-treatment dynamics}. These representations serve as proxies for unobserved confounders and enable individual-level propensity score estimation. Second, we construct individual synthetic controls using a novel balancing condition that generalizes classical convexity constraints. This design preserves interpretability through explicit weighting while achieving robustness and scalability in high-frequency, non stationary, individual-level time series settings where existing methods may fail.

\section{Problem Formulation}
{
We aim to estimate the \textit{average treatment effect for the treated (ATT)} in an observational setting where individual treatment effects (ITE) vary over time and are influenced by latent factors. Specifically, we consider the estimation of the effect of a treatment $T$ (e.g., participation in an intervention) on an outcome $Y$ (e.g., electricity consumption), in the presence of hidden confounders $W$ (e.g., ecological awareness or socioeconomic status). Outcome trajectories are high-frequency time series, and treatment assignment depends on latent confounders $W$ and may be staggered across individuals. Additional observable covariates $X$, which may include observed confounders, can also be incorporated but are not required in our framework. }

{
We consider a time horizon of length $L$, i.e. $t \in \{1,\dots,L\}$. Let $\mathcal{T}$ denote the treated population and $\mathcal{C}$ the control population. To each individual $i \in \mathcal{T} \cup \mathcal{C}$, we associate observed covariates $X_i = \{X_{i,t}\}_{t=1}^{L}$ with $X_{i,t} \in \mathbb{R}^D$, hidden confounders $W_i \in \mathbb{R}^K$, outcomes $Y_i = \{Y_{i,t}\}_{t=1}^{L}$ with $Y_{i,t} \in \mathbb{R}$, and a binary treatment indicator $T_i \in \{0,1\}$ indicating whether the individual receives treatment during the observation period, with a single adoption time per unit.}

{
Let $Y_i^1= \{Y_{i,t}^1\}_{t=1}^{L}$ and $Y_i^0= \{Y_{i,t}^0\}_{t=1}^{L}$ denote the potential outcome trajectories of individual $i$ under treatment and control, respectively. More generally, treatment adoption may occur at different times $t_i$ across individuals, leading to a staggered treatment setting. However, for simplicity and to facilitate comparison with existing baselines, all results in this paper consider a common treatment onset time $t_0$. Our primary objective is to estimate accurate individual-level counterfactuals, which can then be aggregated to estimate the ATT even when treatment adoption times differ across individuals. The extension of the framework to staggered treatment adoption is detailed in Appendix~\ref{appendix:staggered}.
}

{{
Throughout the paper, $(Y^0,Y^1,Y,W,T,X)$ denote generic random variables, while $(Y_i^0,Y_i^1,Y_i,W_i,T_i,X_i)$ denote unit-specific random variables associated with individual $i$.
}}

{
The goal is to estimate the average treatment effect for the treated:
\begin{align}
    \mathrm{ATT}&=
\mathbb{E}\!\left[
\frac{1}{L-t_0+1}
\sum_{t=t_0}^{L}
\bigl(
Y_t^1-Y_t^0
\bigr)
\;\middle|\; T=1
\right] \nonumber
\\
&=
\mathbb{E}\!\left[
\frac{1}{L-t_0+1}
\sum_{t=t_0}^{L}
\bigl(
Y_t-Y_t^0
\bigr)
\;\middle|\; T=1
\right].\nonumber
\end{align}
}

{In traditional causal inference, when confounders $X$ are observed, identification typically relies on the Strong Ignorability assumption \citep{rosenbaum1983central}, which requires both Positivity ($0 < P(T=1|X) < 1$) and Ignorability:
\[
(Y^1, Y^0) \perp\!\!\!\perp T \mid X.
\]}However, in our setting, the confounders $W$ are unobserved, and therefore the standard ignorability assumption does not hold. These hidden factors (e.g., ecological concern, financial status) affect both the likelihood of treatment and the outcome, violating the standard ignorability assumption and introducing bias in naive estimators \citep{cai2012identifying}. {As a result, observed covariates $X$ are omitted from the experiments and theoretical development for simplicity, as our framework focuses on settings where relevant confounding information is not explicitly observed and must be approximated from outcome trajectories. When available, $X$ can be incorporated alongside outcomes within our framework to mitigate their induced bias.}

{Specifically, we suppose that treatment adoption depends on W through: $$g(W) = \mathbb P(T=1|W),$$ and outcome trajectories $Y$ depend on $W$ as shown in Figure \ref{fig:dag}. Our work relies on the  following assumptions, which stand for each individual $i$:
\begin{assumption} (The Stable Unit Treatment Value
Assumption) The potential outcomes for one individual are unaffected by the treatment of others. \end{assumption}
\begin{assumption} (Positivity) Every individual has a non-zero probability of receiving treatment and control given the latent factor $W$: $$0 < \mathbb P(T=1 \mid W=w) < 1.$$ \end{assumption}
\begin{assumption} (Latent Ignorability) The potential outcomes are independent of treatment given hidden factors $W$:$$(Y^1, Y^0) \perp\!\!\!\perp T \mid  W.$$\end{assumption}}
{
\begin{assumption}(Latent recoverability assumption)
Let
$
Y^{\mathrm{pre}} = \{Y_t\}_{t=1}^{t_0-1}
$
denote the pre-treatment outcome trajectory. We assume that there exists a function \(h\) and a sufficiently small constant \(\varepsilon_W > 0\) such that
\[
\|W - h(Y^{\mathrm{pre}})\| \leq \varepsilon_W .
\]
This means that the pre-treatment trajectory contains sufficient information to approximately recover the treatment-relevant latent factors \(W\), up to recovery error \(\varepsilon_W\).
\end{assumption}
}
{\begin{assumption}[Lipschitz regularity]
The recovery map \(h\) and the propensity function \(g\) are Lipschitz continuous. That is, for all \(y,y'\) and \(w,w'\),
\[
\|h(y)-h(y')\| \leq L_h \|y-y'\|,
\]
and
\[
|g(w)-g(w')| \leq L_g \|w-w'\|.
\]
\end{assumption}}
{The latent recoverability assumption is motivated by the fact that latent factors such as lifestyle, behavioral flexibility, socioeconomic status, or disease severity often induce persistent patterns in pre-treatment trajectories. In high-frequency settings, repeated observations over time can therefore provide informative proxies for hidden confounders, even when these factors are not directly observed. On the other hand, the Lipschitz regularity assumption is required for theoretical justification.
}

 {{
Finally, we propose a framework that learns latent representations from pre-treatment trajectories to capture treatment-relevant proxy information about latent confounders. Our primary objective is accurate individual-level counterfactual estimation, which is then aggregated to estimate the ATT. While the main experiments focus on a common treatment onset time $t_0$ for simplicity and comparability with existing baselines, the framework naturally extends to staggered treatment adoption through treatment-time masking, as detailed in Appendix~\ref{appendix:staggered}.
}}

\begin{figure}[t]
\centering

\begin{tikzpicture}[
    node distance=1.5cm,
    latent/.style={circle, draw=red, fill=red!10, minimum size=1cm},
    observed/.style={circle, draw, fill=gray!10, minimum size=1cm},
    arrow/.style={-{Stealth}, thick}
]

% Nodes
\node at (-3.,0.5) { {$i \in \mathcal{T} $}};
\node[latent] (W) at (-2, 1.5) {$W_i$};
\node[observed] (T) at (-2, -0.5) {$T_i$};
% \node[observed] (X) at (-2, -3) {$X_{i,t}$};
\node[observed] (Y) at (2, -0.5) {$Y_{i,t_0:L}$};
\node[observed] (Ypre) at (2, 1.5) {$Y_{i,1:t_0-1}$};

% Arrows
\draw[arrow] (W) -- (T);
\draw[arrow] (W) -- (Y);

\draw[arrow] (W) -- (Ypre);

\draw[arrow] (T) -- (Y);

\draw[arrow] (Ypre) -- (Y);

\end{tikzpicture}
\caption{Causal DAG \citep{joffe2012causal} illustrating  { latent confounders {$W_i$}, treatment {$T_i$}, pre-treatment outcome $Y_{i,1:t_0-1}$ and post-treatment outcome {$Y_{i,t_0:L}$} in the common treatment-time setting.}}
\label{fig:dag}
\end{figure}

\section{Methodology}

\subsection{Overview and Theoretical Intuition}
{We now formalize the data-generating process underlying the generic potential outcome processes introduced in Section~3.}

 Let \( W \) denote a continuous hidden confounder distributed as \( W \sim \mathcal{P}_W \), with density \( P(w) \). We assume that the propensity score is described by a function \(g\) of \(W\) and that the treatment assignment follows a Bernoulli \( T \mid W = w \sim \text{Bernoulli}(g(w)) \). 
 % For instance, treatment starts at time \(t_0\). 
  {Conditional on $W=w$, the potential outcome trajectories
\[
Y^0(w)=\{Y_t^0(w)\}_{t=1}^L,
\qquad
Y^1(w)=\{Y_t^1(w)\}_{t=1}^L,
\]
follow the conditional distributions $\mathcal P_{Y^0\mid W=w}$ and $\mathcal P_{Y^1\mid W=w}$ respectively.}
{The observed outcome trajectories satisfy: \[ Y = TY^1+(1-T)Y^0.\]}
 
{We assume that individual units $(Y_i^0, Y_i^1,Y_i, W_i, T_i)_{i \in \mathcal{T}\cup \mathcal{C}}$ are i.i.d. random variables from the joint distribution of $(Y^0, Y^1,Y, W, T)$. 
}

Inspired by the synthetic control framework, for each treated unit \( i \in \mathcal{T} \),  we assign a set of constant positive weights \( B_i = (b_{ij})_{j\in \mathcal{C}} \), forming a weight matrix \( B \in \mathbb{R}^{|\mathcal{T}| \times |\mathcal{C}|} \).

These weights are used to estimate the counterfactual outcome under control for treated units.
To ensure that the synthetic control provides an unbiased estimate of the untreated outcome for each treated unit, we require that for each \( i \in \mathcal{T} \) :

% \begin{equation}
% \mathbb{E}[Y^0_{i}] = \mathbb{E} \left[ \sum_{j \in \mathcal{C}} b_{ij}Y^0_{j} \right],\nonumber
% \label{eq:ate_expected_1}
% \end{equation}
% which is equivalent to :

\begin{equation}
\mathbb{E}[Y^0_{i,t} \mid T=1] = \mathbb{E} \left[ \sum_{j \in \mathcal{C}} b_{ij}Y^0_{j,t} \mid T=0,\right], \quad \forall t \in \{1, \dots, L \}.
\label{eq:ate_expected}
\end{equation}
Here, conditioning on \(T=1\) and \(T=0\) denotes conditioning on membership in the treated and control populations, respectively.
\begin{proposition}
    {For any fixed treated unit $i \in \mathcal{T}$ with associated weights $(b_{ij})_{j \in \mathcal{C}}$,} equality (\ref{eq:ate_expected}) holds if and only if the following balancing condition is satisfied :
\begin{equation}
\int \left( \frac{g(w)}{\mathbb{P}(T=1)} - \sum_{j\in \mathcal{C}} b_{ij} \frac{1 - g(w)}{\mathbb{P}(T=0)} \right) \mathbb{E}[Y^0_t(w)]\, P(w) \, \, dw = 0, \quad \forall t \in \{1, \dots, L \}.
\label{eq:balancing_condition}
\end{equation}
\end{proposition}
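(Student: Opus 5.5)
The plan is to rewrite both sides of equality (\ref{eq:ate_expected}) as integrals against the marginal density $P(w)$ of the hidden confounder. Then (\ref{eq:ate_expected}) becomes the statement that a single integral vanishes, which is exactly (\ref{eq:balancing_condition}). The two implications are then the same chain of equalities read in opposite directions. I treat the weights $b_{ij}$ as fixed constants and fix a time index $t \in \{1,\dots,L\}$ throughout.

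\textbf{Left-hand side.} I would first disintegrate over $W$:
\[
\mathbb{E}[Y^0_{i,t}\mid T=1]=\int \mathbb{E}[Y^0_{t}\mid W=w,T=1]\,P(w\mid T=1)\,dw .
\]
Two ingredients turn this into the desired form.
\begin{itemize}
\item By Latent Ignorability (Assumption 3), $\mathbb{E}[Y^0_t\mid W=w,T=1]=\mathbb{E}[Y^0_t(w)]$, the mean under $\mathcal P_{Y^0\mid W=w}$.
\item By Bayes' rule with $T\mid W=w\sim\mathrm{Bernoulli}(g(w))$, $P(w\mid T=1)=g(w)P(w)/\mathbb{P}(T=1)$.
\end{itemize}
Positivity (Assumption 2) guarantees $\mathbb{P}(T=1),\mathbb{P}(T=0)>0$, so these conditional densities are well defined. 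The result is
\[
\mathbb{E}[Y^0_{i,t}\mid T=1]=\int \frac{g(w)}{\mathbb{P}(T=1)}\,\mathbb{E}[Y^0_t(w)]\,P(w)\,dw .
\]

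\textbf{Right-hand side.} Since the $b_{ij}$ are constants, linearity gives $\mathbb{E}\bigl[\sum_j b_{ij}Y^0_{j,t}\mid T=0\bigr]=\sum_j b_{ij}\,\mathbb{E}[Y^0_{j,t}\mid T_j=0]$. Because the units are i.i.d. copies of $(Y^0,Y^1,Y,W,T)$, each term equals $\mathbb{E}[Y^0_t\mid T=0]$. Here SUTVA (Assumption 1) ensures the other units' treatments do not enter. The same disintegration, with $P(w\mid T=0)=(1-g(w))P(w)/\mathbb{P}(T=0)$ and ignorability applied on $T=0$, yields
\[
\sum_{j} b_{ij}\int \frac{1-g(w)}{\mathbb{P}(T=0)}\,\mathbb{E}[Y^0_t(w)]\,P(w)\,dw .
\]

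\textbf{Conclusion.} Subtracting the two expressions and pulling the finite sum inside the integral, (\ref{eq:ate_expected}) at time $t$ is equivalent to (\ref{eq:balancing_condition}) at time $t$. Since $t$ is arbitrary, the equivalence holds for all $t$, which proves both directions at once.

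\textbf{Main obstacle.} The delicate step is interpretive rather than computational. Conditioning on $T=1$ and $T=0$ has to be read as conditioning each unit on its own treatment indicator, as the paper's remark after (\ref{eq:ate_expected}) indicates. It must also be justified that $\mathbb{E}[Y^0_{i,t}\mid T_i=1]$ depends on $i$ only through the common law, which follows from the i.i.d. assumption. Finally, exchanging the sum and the integral requires integrability: I would assume $\int |\mathbb{E}[Y^0_t(w)]|P(w)\,dw<\infty$, which is implicit whenever the expectations in (\ref{eq:ate_expected}) are finite.
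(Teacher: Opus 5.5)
Your proposal is correct and follows essentially the same route as the paper. Both use iterated expectation over $W$, latent ignorability and Bayes' rule to write each side of (\ref{eq:ate_expected}) as an integral against $P(w)$, with the i.i.d./exchangeability assumption removing the unit index, after which the equivalence is immediate; your remarks that positivity gives $\mathbb{P}(T=1),\mathbb{P}(T=0)>0$ and that integrability justifies exchanging the finite sum and the integral are sound refinements the paper leaves implicit.
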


\begin{proof}

With $g(w)=\mathbb{P}(T=1\mid W=w)$, {latent ignorability} $\bigl(Y^0\perp\!\!\!\perp T\bigr|W\bigr)$, the law of iterated expectation {and Bayes rule }give
\[
\mathbb{E}(Y_{i,t}^0\mid T=1)=
\!\!\int\!\!\frac{g(w)}{\mathbb{P}(T=1)}\,\mathbb{E}[Y_{i,t}^0|W_i=w]\,P(w)\,dw. \tag{P1}
\]
{And since units are exchangeable conditional on  $W$, we write $\mathbb{E}[Y^0_{i,t} \mid W_i=w] = \mathbb{E}[Y^0_t(w)]$ which yields: \[
\mathbb{E}[Y_{i,t}^0\mid T=1]=
\!\!\int\!\!\frac{g(w)}{\mathbb{P}(T=1)}\,\mathbb{E}[Y^0_t(w)]\,  P(w)\,dw. 
\]} 

For the synthetic control:
\[
\mathbb{E}\!\Bigl[\sum_{j  \in \mathcal{C}}b_{ij}Y_{j,t}^0\mid T=0\Bigr]=
\!\!\int\!\!\sum_{j \in \mathcal{C}}b_{ij}\frac{1-g(w)}{\mathbb{P}(T=0)}\,
\mathbb{E}[Y^0_t(w)]\, P(w)\,dw \tag{P2}.
\]
Equality of (P1) and (P2) is equivalent to \eqref{eq:balancing_condition}.  
Conversely, if \eqref{eq:balancing_condition} holds, (P1) = (P2) follows immediately. 
\end{proof}

When treatment is independent of \(W\), i.e., \(T \perp\!\!\!\perp  W\), then \(g(w) = \mathbb{P}(T=1)\) and { for $i \in \mathcal{T}$, }equation~\ref{eq:balancing_condition} simplifies to:
\[
\sum_{j \in \mathcal{C}} b_{ij} = 1,
\]
which recovers the classical convex constraint in synthetic control{}:
\[
\hat{Y}^0_{i,t} = \sum_{j \in \mathcal{C}} b_{ij} Y_{j,t} \quad \text{with } b_{ij} \geq 0, \; \sum_{j \in \mathcal{C}} b_{ij} = 1, \quad \forall t \in \{1, \dots, L\},
\]
{where $\hat{Y}^0_{i,t}$ is the estimated potential outcome under control for unit $i$ at time $t$}.

However, under hidden confounding, equation~\ref{eq:balancing_condition} reveals that the weights must account for differences in the distribution of \( W \) between the treated and control groups. {This motivates balancing on the propensity score $g(W)$, which summarizes the treatment-relevant information in $W$\citep{rosenbaum1983central}, leading to the following proxy balancing condition for each unit \(i \in \mathcal{T}\):}

\begin{equation}
\mathbb{E}[g(W_i) \mid T=1] = \mathbb{E}\left[ \sum_{j \in \mathcal{C}} b_{ij} g(W_j) \mid T=0 \right].
\label{eq:proxy_balancing}
\end{equation}

This is a practical relaxation of the stronger balancing condition and focuses on matching the distributions of treatment propensities.

{In practice, the confounder \(W\) is unobserved, and therefore \(g(W)\) cannot be computed directly. To address this, we introduce a learned representation \(Z\) obtained from pre-treatment outcome trajectories $Y^{pre}$, which are assumed to be informative of the treatment-relevant latent structure (Assumption 4), so that \(Z\) captures information about \(W\) relevant for treatment assignment and outcome dynamics.}
{
Under this assumption, we estimate a surrogate propensity score \(\hat g(Z) \approx \mathbb{P}(T=1\mid Z)\) using a neural classifier. While \(\hat g(Z)\) does not recover the true propensity \(g(W)\) in general, we show (see Proposition~2) that, under mild regularity conditions given by Assumption 5, it provides an approximation whose error depends on (i) how well pre-treatment outcomes encode the latent confounder, (ii) the reconstruction quality of the learned representation, and (iii) the accuracy of the propensity model.}

{
Our proposed model therefore proceeds as follows: we first learn latent representations \(Z\) from pre-treatment trajectories, jointly trained to reconstruct outcomes and predict treatment assignment by estimating \(\hat g(Z)\). We then optimize weights \(B\) to align treated and control groups according to the proxy balancing condition~(\ref{eq:proxy_balancing}). This approach replaces balancing on the unobserved \(W\) with a learned, data-driven surrogate that captures treatment-relevant heterogeneity.}

\subsection{Latent Representation Learning via Variational Autoencoding}
\begin{figure*}[t]
    \centering
    \includegraphics[scale=0.5]{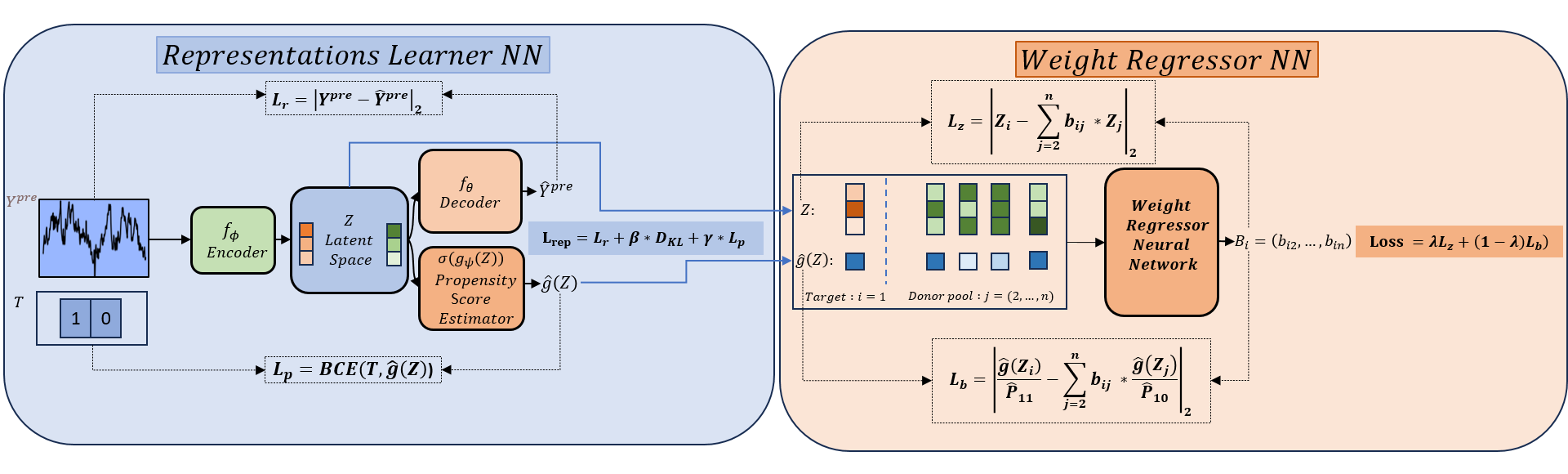}
    \caption{Overview of the proposed algorithm B-Twin. The architecture consists of two main components. The first block is a \textit{representation learning} module based on a two-headed variational autoencoder (VAE). It takes as input the pre-treatment outcomes and produces latent representations $Z$ via an encoder-decoder structure. Simultaneously, it estimates the individual propensity scores $\hat{g}(Z)$ from the latent representations and the treatment indicator. This block is trained using a reconstruction loss \( \mathcal{L}_r \), a propensity prediction loss \( \mathcal{L}_p \) and a KL divergence term. The second block is a \textit{weight regressor} module. It takes as input the latent representations and estimated propensity scores of a target unit (e.g., a treated individual), along with those of the donor pool (e.g., control units). It outputs a set of balancing weights \( B \) that minimize a combined loss function composed of \( \mathcal{L}_z \) and \( \mathcal{L}_b \), which enforce similarity in both latent space and estimated propensity scores. }
    \label{fig:model_graph}
\end{figure*}

Our architecture\footnote{The implementation is available at: \url{https://github.com/Mahaspace/B-Twin}}  consists of an encoder that maps high-frequency time series data into a low-dimensional latent space \( Z \), intended to capture the influence of unobserved confounders \( W \). The encoder \( f_\phi \) and decoder \( f_\theta \) together form a variational autoencoder (VAE) (\citealt{kingma2013auto,vaeTS}), trained using a reconstruction loss over pre-treatment observations \(Y^{pre}\). From the latent representation \( Z_i \),  $i \in \mathcal{T}\cup \mathcal{C}$, we also estimate the propensity score \( \hat{g}(Z_i) \) via a classifier head. Formally:
\[
Z_i = f_\phi( Y_i^{pre}), \quad \hat{Y}_i^{pre} = f_\theta(Z_i), \quad \hat{g}(Z_i) = \sigma(g_\psi(Z_i)),
\]
where \(\phi , \theta \,  \text{and} \,\psi\) are the parameters of the representation learner,   \( \sigma \) is the sigmoid activation and \( g_\psi \) is a separate multilayer perceptron (MLP) used for propensity estimation.

The latent space \( Z \) serves as a proxy for information about $W$ contained in the pre-treatment trajectory, encoding relevant temporal and static features for both outcome modeling and treatment assignment. While our experiments use only pre-treatment time series as input, the framework is flexible: additional covariates \(X\) can be concatenated to the input without altering the model's structure. For simplicity and efficiency, we primarily use dense layers, which we found to be well-suited for our high-frequency energy data.

The representation block is trained using the following composite loss function (see Figure \ref{fig:model_graph}):
\[
\mathcal{L}_{rep} = \mathcal{L}_{\text{r}} + \beta D_{\text{KL}} + \gamma \mathcal{L}_{\text{p}},
\]

where
\begin{align*}
\mathcal{L}_{\text{r}} &= \| Y^{\text{pre}} - \hat{Y}^{\text{pre}} \|^2 ,\quad 
\mathcal{L}_{\text{p}} = \text{BCE}(T, \hat{g}(Z)), \quad
D_{\text{KL}} = \text{KL}\left[q(Z \mid   Y^{\text{pre}}) \parallel p(Z)\right].
\end{align*}

\( \beta  \in [0,1]\) and \( \gamma  \in [0,1]\) are hyperparameters that control the contribution of the Kullback--Leibler divergence {(KL)} and Binary Cross Entropy {(BCE)} for the propensity estimation loss, respectively.

This block is trained on both treated and control units, ensuring that the learned latent space captures variation across the entire population.
{The following proposition formalizes the connection between the learned propensity score $\hat g(Z)$ and the latent propensity score $g(W)$. It shows that the approximation quality depends jointly on the informativeness of the pre-treatment trajectories, the reconstruction quality of the latent representation, and the accuracy of the propensity estimator.}
{
\begin{proposition}[Approximation of the latent propensity]
Under the assumptions above, suppose that:
(i) the reconstruction error satisfies 
\[
\|Y^{\mathrm{pre}} - f_{\theta}(f_{\phi}(Y^{\mathrm{pre}}))\| \le \varepsilon_Y,
\]
and (ii) the propensity model satisfies
\[
|\hat g(Z)-\mathbb P(T=1 \mid Z)| \le \varepsilon_P.
\]
Then,
\[
|\hat g(Z)-g(W)|
\leq
\varepsilon_P
+
2L_g \varepsilon_W
+
2L_g L_h \varepsilon_Y.
\]
\end{proposition}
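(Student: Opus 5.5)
We will prove the bound by inserting intermediate quantities between $\hat g(Z)$ and $g(W)$ and applying the triangle inequality. The natural chain is
\[
\hat g(Z) \;\to\; \mathbb P(T=1\mid Z) \;\to\; g\bigl(h(f_\theta(Z))\bigr) \;\to\; g(W).
\]
Here $f_\theta(Z)=\hat Y^{\mathrm{pre}}$ is the decoded pre-treatment trajectory. The point $h(\hat Y^{\mathrm{pre}})$ is a proxy for $W$ that depends on $Z$ only. The first link costs $\varepsilon_P$ directly by hypothesis (ii). The two remaining links each have to be controlled using Assumptions 4 and 5, and the factor $2$ in the statement should come from needing the same estimate twice.

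We first bound the last link pointwise. By the triangle inequality, Assumption 4, Lipschitz continuity of $h$ and hypothesis (i),
\[
\|W-h(\hat Y^{\mathrm{pre}})\|\le \|W-h(Y^{\mathrm{pre}})\|+\|h(Y^{\mathrm{pre}})-h(\hat Y^{\mathrm{pre}})\|\le \varepsilon_W+L_h\varepsilon_Y .
\]
Lipschitz continuity of $g$ then gives $|g(W)-g(h(\hat Y^{\mathrm{pre}}))|\le L_g(\varepsilon_W+L_h\varepsilon_Y)$.

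For the middle link, we use that $T$ depends on $(Y^{\mathrm{pre}},Z)$ only through $W$ (Figure~\ref{fig:dag}: $Z$ is a function of $Y^{\mathrm{pre}}$, and $T$ is a child of $W$ only). The tower property therefore gives
\[
\mathbb P(T=1\mid Z)=\mathbb E[g(W)\mid Z].
\]
Since $h(\hat Y^{\mathrm{pre}})$ is $Z$-measurable,
\[
\bigl|\mathbb E[g(W)\mid Z]-g(h(\hat Y^{\mathrm{pre}}))\bigr|\le \mathbb E\bigl[|g(W)-g(h(\hat Y^{\mathrm{pre}}))|\,\big|\,Z\bigr]\le L_g(\varepsilon_W+L_h\varepsilon_Y),
\]
because the pointwise bound from the previous step holds almost surely. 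Summing the three links yields
\[
|\hat g(Z)-g(W)|\le \varepsilon_P+2L_g\varepsilon_W+2L_gL_h\varepsilon_Y,
\]
which is exactly the claimed bound.

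The main obstacle is the middle step, namely justifying $\mathbb P(T=1\mid Z)=\mathbb E[g(W)\mid Z]$. This needs the conditional independence $T\perp\!\!\!\perp Y^{\mathrm{pre}}\mid W$. It must be read off the DAG, since Assumption 3 concerns potential outcomes. For the pre-treatment period $Y^{\mathrm{pre}}=Y^{0,\mathrm{pre}}$, so Assumption 3 also yields the needed independence. A second subtlety is that hypotheses (i), Assumption 4 and (ii) must hold almost surely rather than only on average for the conditional-expectation bound to go through; we will state this interpretation explicitly.
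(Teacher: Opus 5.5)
Your proposal is correct and follows the paper's proof essentially step for step. It uses the same chain $\hat g(Z)\to\mathbb P(T=1\mid Z)\to g(h(f_\theta(Z)))\to g(W)$, the same pointwise bound $L_g(\varepsilon_W+L_h\varepsilon_Y)$ applied twice, and the same tower-property/Jensen argument via $T\perp\!\!\!\perp Z\mid W$ for the middle link, with your remarks on almost-sure validity and on deriving this independence from Assumption 3 under no anticipation ($Y^{\mathrm{pre}}=Y^{0,\mathrm{pre}}$) being clarifications the paper leaves implicit.
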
}

{\begin{proof} \textit{(Proof sketch.)}
The proof proceeds in two steps. First, using the latent recoverability assumption together with the Lipschitz continuity of \(h\) and \(g\), we show that the reconstruction \(f_\theta(Z)\) induces an approximation of the latent propensity:
\[
|g(W)-g(h(f_\theta(Z)))|
\le
L_g\varepsilon_W + L_gL_h\varepsilon_Y.
\]
This bound combines the recoverability error \(\varepsilon_W\) and the reconstruction error \(\varepsilon_Y\).
Second, we control the discrepancy between the learned propensity \(\hat g(Z)\) and \(g(h(f_\theta(Z)))\). Using the propensity estimation error \(\varepsilon_P\), the law of iterated expectations, and the conditional independence assumption \(T  \perp\!\!\!\perp Z \mid W\), we obtain
\[
|\hat g(Z)-g(h(f_\theta(Z)))|
\le
\varepsilon_P + L_g\varepsilon_W + L_gL_h\varepsilon_Y.
\]
Combining both inequalities through the triangle inequality yields the final result. The complete proof is provided in Appendix~\ref{appendix:prop2_proof}.\end{proof}}
{
The proposition shows that, under the outcome informativeness assumption, minimizing both the reconstruction and propensity estimation losses encourages the learned representation to capture the treatment-relevant information contained in the hidden confounder. In particular, when the reconstruction error and propensity prediction error are small, the learned propensity $\hat g(Z)$ can be interpreted as an approximation of $g(W)$, with error controlled by recoverability, reconstruction, and propensity-estimation errors.
Importantly, the result does not claim exact identification of the hidden confounder itself, but rather capturing the treatment-relevant latent structure required for estimating propensity scores.}

\subsection{Counterfactual Estimation and Weights Learning}

Once the variational auto-encoder and propensity score estimator are trained {by minimizing the reconstruction loss and propensity estimation loss}, we introduce a second neural module to compute balancing weights for counterfactual estimation. This weight regressor takes as input for each target unit $i$ (e.g., treated), its latent representation \( Z_i \) and estimated propensity score \( \hat{g}(Z_i) \) together with the corresponding quantities \( \{Z_j, \hat{g}(Z_j)\} \) for the donor pool (e.g., all control units \( j \in \mathcal{C} \)). It outputs weights \( \{b_{ij}\} \) to construct a synthetic control for unit \( i \).

We train this module to minimize a weighted combination of two objectives:
\[
\mathcal{L}_{\text{match}} = \lambda\mathcal{L}_z + (1 - \lambda) \mathcal{L}_b,
\]
where \( \lambda \in [0,1] \) balances the contribution of the latent and propensity terms. The losses are defined as : 
\begin{align*}
\mathcal{L}_z &= \frac{1}{|\mathcal{T}|} \sum_{i \in \mathcal{T}} \left\| Z_i - \sum_{j \in \mathcal{C}} b_{ij} Z_j \right\|^2,\qquad 
\mathcal{L}_b = \frac{1}{|\mathcal{T}|} \sum_{i \in \mathcal{T}} \left\| \frac{\hat{g}(Z_i)}{\hat{P}_{11}} - \sum_{j \in \mathcal{C}} b_{ij} \frac{\hat{g}(Z_j)}{\hat{ P}_{10}} \right\|^2,
\end{align*}

 with :
 \begin{align}
 \hat{P}_{11} = \frac{1}{|\mathcal{T}|} \sum_{i \in \mathcal{T}} \hat{g}(Z_i) , \qquad \hat{ P}_{10} = \frac{1}{|\mathcal{C}|} \sum_{i \in \mathcal{C}} \hat{g}(Z_i), \label{eq:probability_T}
 \end{align}

 where \(\mathcal{L}_b \) loss is derived from Equation (\ref{eq:proxy_balancing}). Normalization by \( \hat{P}_{11} \) and \( \hat{P}_{10} \) is added to avoid  instability due to scale differences between treated and control groups, leading to degenerate solutions when treatment probabilities are skewed.
 
\subsection{Training Procedure and Hyperparameters tuning}

Our model is trained in two stages. First, the representation block learns latent embeddings by minimizing a combined loss of reconstruction error, KL divergence (\(\beta = 0.005\)), and propensity prediction loss (\(\gamma = 0.1\)). We use the Adam optimizer \cite{kingma2014adam} (learning rate \(10^{-3}\)), dropout rate of 0.3, and gradient clipping at 1.0.

After freezing the representations, the weight regressor is trained to learn individual-specific balancing weights using a latent matching loss (\(\lambda = 0.7\)). This phase also uses Adam (\(10^{-4}\)) with the same regularization settings. Importantly, although matching requires the full set
of potential matches, the query group can
be processed in mini-batches while keeping the full donor pool. This strategy allows
the model to scale efficiently to large datasets, enabling
inference on high-dimensional, individual-level data
without prohibitive memory requirements. 

Hyperparameters were chosen for empirical stability, convergence and decreasing loss but could be tuned automatically via cross-validation or using a placebo dataset.

\subsection{Counterfactual Estimation}
After training the weight regressor, we obtain estimated matching weights \(\hat B = (\hat b_{ij})_{i \in \mathcal{T},j\in \mathcal{C}}\).
Using the learned latent representations and these estimated weights, the counterfactual outcome for a treated unit \( i \in \mathcal{T} \) at time $t$ is estimated as:

\[
\hat{Y}_{i,t}^0 = \sum_{j \in \mathcal{C}} \hat{b}_{ij} Y_{j,t}.
\]

\noindent The corresponding individual treatment effect (ITE) is then given by:

\[
\hat{\text{ITE}}_{i,t}= Y_{i,t} - \hat{Y}_{i,t}^0, \quad t \geq t_0.
\]

\subsection{Average Treatment Effect on the Treated (ATT)}

Using importance weighting \citep{kloek1978bayesian} with the latent propensity score \(g(W)\), the ATT can be rewritten as:
\begin{align}
\mathrm{ATT}
&=
\mathbb{E}\!\left[
\frac{1}{L-t_0+1}
\sum_{t=t_0}^{L}
\bigl(
Y_t^1-Y_t^0
\bigr)
\;\middle|\; T=1
\right]
&=
\mathbb{E}_W\!\left[
\frac{g(W)}{\mathbb{P}(T=1)}
\mathbb{E}\!\left[
\frac{1}{L-t_0+1}
\sum_{t=t_0}^{L}
\bigl(
Y_t^1-Y_t^0
\bigr)
\;\middle|\; W
\right]
\right].
\label{eq:ATT}
\end{align}
\subsubsection*{Estimation}
A standard empirical estimator of the ATT based on treated units is:

\begin{equation}
\hat{\mathrm{ATT}} =\frac{1}{L-t_0+1}\sum_{t=t_0}^{L}\frac{1}{|\mathcal{T}| } \sum_{i \in \mathcal{T} }(Y_{i,t} - \sum_{j \in \mathcal{C}} \hat{b}_{ij} Y_{j,t}).
\label{eq:estimator_1}
\end{equation}

{Motivated by Equation~(\ref{eq:ATT}), we instead propose an importance-weighted estimator that leverages samples from the full population. Equation~(\ref{eq:ATT}) rewrites the ATT as an expectation over the marginal latent distribution \(\mathcal P_W\) using importance weighting. Consequently, the corresponding empirical estimator is no longer restricted to treated units only, but can leverage samples from the entire population \((\mathcal T \cup \mathcal C)\), weighted according to their estimated treatment propensity.}
{
To symmetrize the estimator, we additionally compute reverse matching weights \((\hat b_{ji})_{j \in \mathcal{C}, i\in \mathcal{T}}\) from control units to treated units, allowing counterfactual outcomes to also be constructed for control samples. This leads to the following symmetrized empirical estimator, where both treated and control units contribute to the empirical approximation of the expectation over \(W\).}
% We estimate the ATT via empirical approximation of equation (\ref{eq:ATT}) using samples from the full population (\(\mathcal{T} \cup \mathcal{C}\)). 
Given latent representations \(Z\), estimated propensity scores \(\hat{g}(Z)\), observed outcomes \(Y\), and estimated matching weights, we approximate ATT as:

\begin{align}
\hat{\text{ATT}} =\frac{1}{n (L-t_0+1) } \sum_{t=t_0}^{L} \Bigg[ \sum_{i \in \mathcal{T}} \left(\frac{\hat{g}(Z_i)}{\hat{ P}_{11}} Y_{i,t} - \sum_{j \in \mathcal{C}} \hat{b}_{ij} \frac{\hat{g}(Z_j)}{\hat{P}_{10}} Y_{j,t} \right)+ \sum_{j \in \mathcal{C}}\left( \sum_{i \in \mathcal{T}} \hat{b}_{ji} \frac{\hat{g}(Z_i)}{\hat{ P}_{11}} Y_{i,t} -  \frac{\hat{g}(Z_j)}{\hat{ P}_{10}} Y_{j,t}\right)\Bigg].
\label{eq:estimator_2}
\end{align}

Here : \(n=|\mathcal{T}\cup \mathcal{C}|\) and \(\hat{P}_{11}\) and \(\hat{P}_{10}\) are defined in equation (\ref{eq:probability_T}). 

% {An extension of this ATT to staggered treatment adoption is given in Appendix~\ref{appendix:staggered}.}\\
\noindent In our ablation study, we empirically observe that this approach yields more robust and stable estimates of ATT than the standard estimator in Eq.~(\ref{eq:estimator_1}), particularly when propensity score estimation is noisy.

\section{Experiments }
We evaluate our method across different settings: simulated datasets, a semi-synthetic version of the mimic dataset, a real-world electricity behavioral challenge and its semi-synthetic version for comparison. These experiments help assess robustness, bias, and scalability, comparing B-Twin to classical econometric models and neural baselines in terms of accuracy and efficiency.
\subsection{Simulated data, ablation and tuning analysis}
In this section, we illustrate our approach on two simulation models, and conduct an ablation study and hyperparameters analysis for the simplest case.

\subsubsection{Toy Dataset 1}

In the first simulation model, a hidden confounder influences both the treatment assignment and the outcome, with its effect on the outcome increasing gradually over time, in order to break the parallel trends assumption. This design makes the confounding effect difficult to infer from the pre-treatment observations, posing a challenge for methods that assume parallel trends or rely on short-term observation.

Untreated outcomes for each individual \( i \) at time \( t \) are generated using the following process:

\[
Y^0_{i,t} = 0.5\, q_t\, W_i^2 + \alpha\, t\, \mathbf{1}_{W_i > 0.5} + \epsilon_{it}, \quad \forall i, t ,
\]

where \( q_t \) is a time-varying factor following an AR(1) process: \( q_t = \rho\, q_{t-1} + \epsilon_t \), and \( W_i \sim \mathcal{U}[0,1] \) is an unobserved confounder. The noise terms follow \( \epsilon_t \sim \mathcal{N}(0, \sigma) \). This formulation is inspired by the latent factor model in SyncTwin (Qian et al., 2021).
% where \( W \) influences the outcome of interest non-linearly.

Treatment begins at time \( t = t_0 \), with treatment assignment governed by the following logistic function:
\begin{align}
   \mathbb P(T_i= 1 \mid W_i = w) = g(w) = \frac{1}{1 + \exp(-5(w - 0.5))},
\label{func: log}
\end{align}

introducing selection bias between treated and control units. The treated outcome is defined as:

\[
Y^1_{it} = Y^0_{it} + \tau_i\, \mathbf{1}\{T_i = 1,\, t \geq t_0\}.
\]

where \(\tau_i\) is the individual treatment effect.
\subsubsection*{Simulation Parameters}

\begin{table}[H]
    \centering
    \begin{tabular}{|c|c|c|c|c|}
    \toprule
       Setting&$\sigma$  & $\alpha$ & $\tau_i$ & $g(w)$  \\
       \hline
       a&5 & 0.05 & 1.54 & 0.5 \\
        b&1 & 0 & 1.54 & logistic (\ref{func: log}) \\
        
        c&5 & 0.05 & 1.54 & logistic (\ref{func: log}) \\
        d&5 & 0.05 & $4 \log(1 + W_i)$ & logistic (\ref{func: log})\\
    \bottomrule
    \end{tabular}
    \caption{Simulation settings: outcome noise (\(\sigma\)), confounding strength (\(\alpha\)), treatment effect (\(\tau_i\)), and treatment assignment function \(g(w)\).}
    \label{tab:simulation_params}
\end{table}
We evaluate four configurations summarized in Table~\ref{tab:simulation_params}, varying noise (\(\sigma\)), confounding strength (\(\alpha\)), treatment effect (\(\tau_i\)), and propensity score \(g(w)\).
The first setting represents a randomized scenario, where standard methods perform well.
The second setting has minimal confounding and noise.  The third is our primary benchmark with strong noise and a constant effect. The fourth introduces treatment effect heterogeneity via dependence on the hidden confounder \(W\). The dataset contains 500 individuals with outcomes of length 168 each, and treatment starts at $t_0 = 84$.
Neural methods estimate ATT by averaging individual effects, while classical baselines use group-level averages (a single aggregated curve for the treated group) and generate a single counterfactual. Although this comparison favors traditional methods computationally, we adopt it to reduce runtime. Despite this, our approach remains competitive and gives consistent and stable results.

\subsubsection*{Evaluation Strategy and Baselines}
We evaluate all methods based on their ability to recover the average treatment effect on the treated (ATT) over 100 Monte Carlo replications. Results are summarized using boxplots, allowing us to assess both accuracy and stability. A method is considered reliable if its estimates are consistently close to the ground truth and exhibit low variance across runs.

Our baselines include classical econometric approaches such as Ordinary Least Squares (OLS) and Difference-in-Differences (DiD), which rely on linearity and the parallel trends assumption. We also consider several variants of Synthetic Control (SC): the standard formulation, regularized versions using lasso, ridge, and elastic-net penalties, and a dynamic time warping (DTW) variant designed to improve temporal alignment. In addition, we include Synthetic Difference-in-Differences (SDID), which combines elements of DiD and SC to balance units and time periods. {We additionally include Regression Discontinuity in Time (RDiT) \citep{RDiT}, which estimates treatment effects by treating intervention time as a temporal cutoff and measuring local discontinuities around treatment onset.}

Among neural network baselines, we include SyncTwin, a representation-learning method that uses temporal autoencoding to model latent confounding while retaining a synthetic-control-style parametric estimation. We also evaluate state-of-the-art neural causal inference methods designed for cross-sectional settings, including TARNet, CFRNet, DragonNet, and BCAUSS. These models estimate potential outcomes via learned representations and outcome regressors, and we include them to examine how such approaches perform when applied to high-frequency time series data with hidden confounding.

Together, these baselines span a wide range of econometric, synthetic control, and learning-based approaches. Our proposed method, \textbf{B-Twin}, is compared against all baselines, and we additionally report an oracle variant, \textbf{B-Twin-O}, which uses ground-truth propensity scores to isolate the effect of propensity estimation.

 \subsubsection*{Results}

\begin{figure}[t]
  \centering

  \subfigure[]{
    \includegraphics[width=0.45\linewidth]{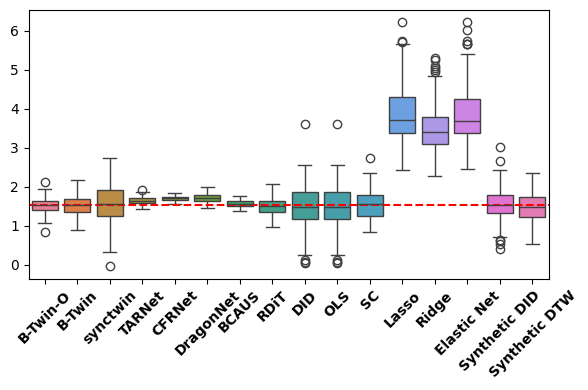}
    \label{fig:sim1_sub1}
  }
  \hfill
  \subfigure[]{
    \includegraphics[width=0.45\linewidth]{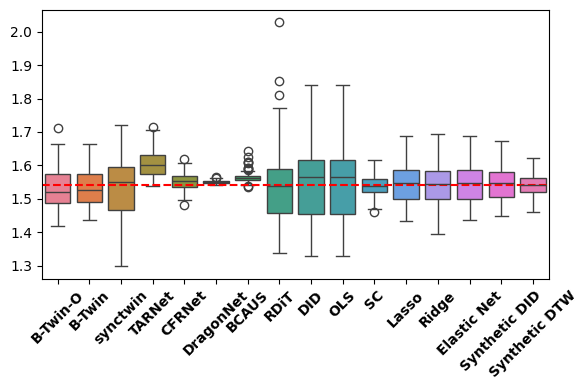}
    \label{fig:sim1_sub2}
  }

  \subfigure[]{
    \includegraphics[width=0.45\linewidth]{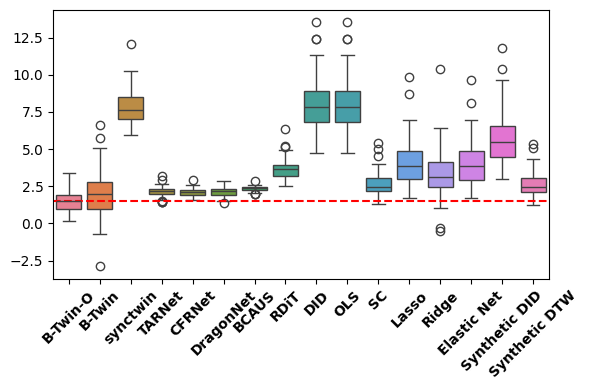}
    \label{fig:sim1_sub3}
  }
  \hfill
  \subfigure[]{
    \includegraphics[width=0.45\linewidth]{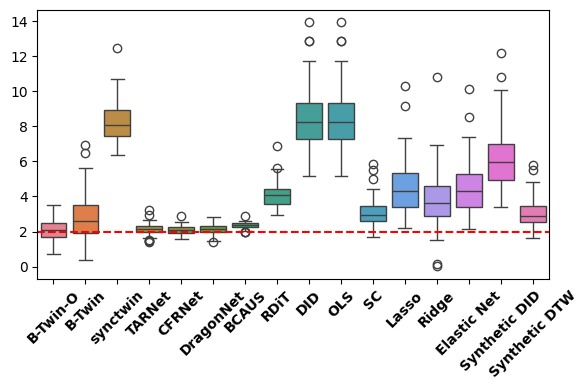}
    \label{fig:sim1_sub4}
  }

  \caption{ATT estimation across four settings (see Table \ref{tab:simulation_params}), 100 Monte Carlo simulations. (a) Randomized experiment; (b) low noise and low confounding; (c) high noise and strong confounding; (d) heterogeneous treatment effects. The red dashed line denotes the ground-truth ATT. }
   \label{fig:att_results}
\end{figure}

Figure~\ref{fig:att_results} summarizes ATT estimation performance across four experimental settings. 
In the randomized setting (a), where treatment assignment is independent of the hidden confounder, most methods achieve low bias and variance, as expected.

In the low-noise, weak-confounding setting (b), 
all the methods estimate the true effect, with some variability across methods.

Under high noise and strong confounding (c), classical approaches degrade substantially, and SyncTwin struggles due to its reliance on temporal reconstruction via recurrent architectures. Interestingly, neural baselines such as TARNet, CFRNet, DragonNet, and BCAUSS achieve strong performance in terms of accuracy and confidence interval width, despite not being explicitly designed for this setting. We hypothesize that, by treating pre-treatment outcomes as covariates, these models implicitly capture predictive relationships in the data-generating process and extrapolate potential outcomes through outcome regression.

Across all settings, B-Twin remains stable, closely tracking the oracle B-Twin-O, which highlights its robustness to both noise and hidden confounding.

In the heterogeneous treatment effect scenario (d), B-Twin-O and B-Twin again recover accurate ATT estimates, while classical baselines and SyncTwin exhibit substantial bias. Although neural baselines continue to perform well in this scenario, their reliance on outcome regression may make them more sensitive to changes in the data-generating process. As we demonstrate in the next set of simulations, their performance may deteriorate when non-stationarity causes the relationship between pre- and post-treatment outcomes to shift over time.

\subsubsection{Toy Dataset 2 }
In this experiment, we modify the data-generating process (DGP) such that pre-treatment and post-treatment outcomes are generated by different mechanisms, independently of the treatment itself, while remaining influenced by the hidden confounder. 

The outcome is generated using the following generative process : 
\[
Y^0_{i,t}
= 0.5\, q_t W_i^2
 + \alpha t\, \mathbf{1}_{\{W_i > 0.5\}}
 + \epsilon_{i,t},
\quad \forall i,\; t \le t_0,\] 
\[Y^0_{i,t}
=0.5\, q_t W_i^2
 + \alpha t\, \mathbf{1}_{\{W_i > 0.5\}}
 + \beta_1 \sin(k t) W_i
 + \beta_2 q_t^2 W_i
 + \epsilon_{i,t},
\quad \forall i,\; t > t_0 .\]

where \( q_t \), \(W_i\) and \(\epsilon\)  are chosen as before.
This setting introduces non-stationarity between the pre- and post-intervention periods. We further increase the dataset size to 10,000 individuals to assess how different methods scale under these conditions. In the following experiments, we chose \(\beta_1=3\) , \(\beta_2=0.2\) and \(k=0.1\).

\subsubsection*{Results}

\begin{figure}[t]
  \centering

  \subfigure[]{
    \includegraphics[width=0.45\linewidth]{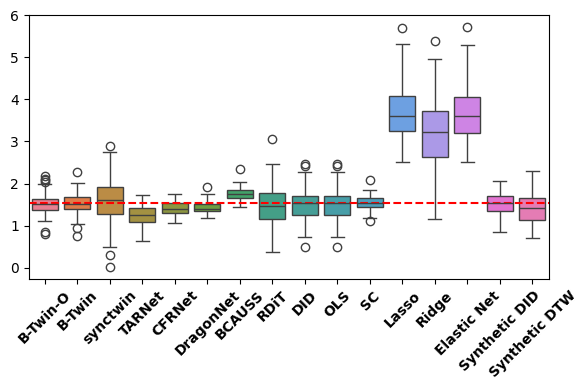}
    \label{fig:sim2_sub1}

  }
  \hfill
  \subfigure[]{
    \includegraphics[width=0.45\linewidth]{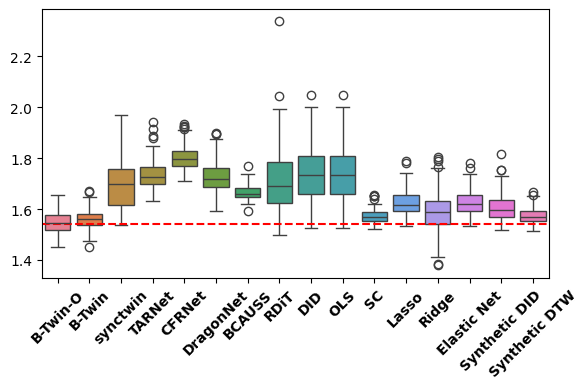}
    \label{fig:sim2_sub2}
  }

  \subfigure[]{
    \includegraphics[width=0.45\linewidth]{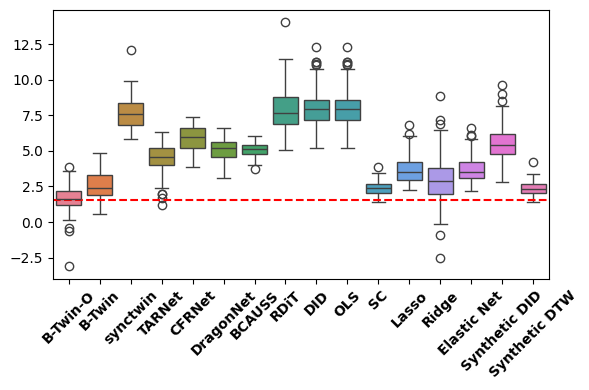}
    \label{fig:sim2_sub3}
  }
  \hfill
  \subfigure[]{
    \includegraphics[width=0.45\linewidth]{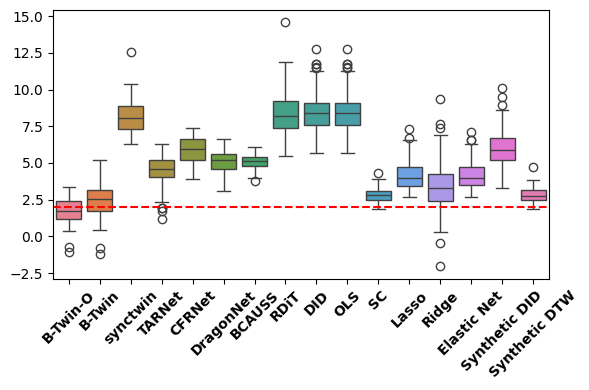}
    \label{fig:sim2_sub4}
  }

  \caption{ATT estimation across four settings (see Table \ref{tab:simulation_params}), 100 Monte Carlo simulations. (a) Randomized experiment; (b) low noise and low confounding; (c) high noise and strong confounding; (d) heterogeneous treatment effects. The red dashed line denotes the ground-truth ATT. }
   \label{fig:att_results_2}
\end{figure}
Figure \ref{fig:att_results_2} reports the distribution of ATT estimates across 100 Monte Carlo runs for the four experimental settings described in Table \ref{tab:simulation_params}. In the randomized treatment assignment setting (a), all methods exhibit low bias and variance, as expected in the absence of confounding.

In setting (b), where observational noise remains low and the parallel trends assumption approximately holds, the methods approximately estimate a value that is close to the true treatment effect. However, performances degraded compared to the first simulation, where the data-generating process (DGP) was stationary. 
In settings (c) and (d), neural outcome-regression baselines such as TARNet, CFRNet, DragonNet, and BCAUSS show a marked increase in bias and variance compared to the previous cases. This degradation highlights their reliance on learning a stable data-generating process (DGP) from pre-treatment outcomes and extrapolating it to the post-treatment period, an assumption that is weakened in this setting.
% performance begins to degrade for several methods. In particular,

Across all scenarios, B-Twin consistently tracks its oracle counterpart (B-Twin-O), demonstrating robustness to hidden confounding and noise while maintaining low variance, even in the more challenging regimes (c) and (d) where confounding is strong and the DGP evolves over time. These results suggest that matching individuals in a learned latent space, rather than directly generating potential outcomes via outcome regression, seems more robust under non-stationarity, an important property for counterfactual estimation in high-frequency observational time series.

Finally, we note that classical approaches such as synthetic control, are applied here to a single aggregated treated trajectory. While this simplifies the estimation task relative to individual-level counterfactual generation, it is included for reference, but we were unable to run it on individual level. In practice, classical synthetic control methods are computationally infeasible and numerically unstable at the individual level for large-scale, high-frequency datasets {(see training times in Figure 7, 8 and 9)}, whereas B-Twin is explicitly designed to operate at this granularity.

\subsubsection{Ablation Study}
We perform an ablation study on the first simulated dataset, which provides a clear and controlled setting for analyzing architectural variants and hyperparameter sensitivity.
\subsubsection*{Architecture Variants}
We first analyze several architectural variants of B-Twin to evaluate the impact of key modeling decisions. As summarized in Table~\ref{tab:ablation_variants}, we examine (i) whether propensity scores are estimated or known, (ii) the role of different loss components, and (iii) whether the Average Treatment Effect on the Treated (ATT) is computed using only treated units or the full population.
\begin{figure}[t]
  \centering

  \subfigure{
    \includegraphics[width=0.45\linewidth]{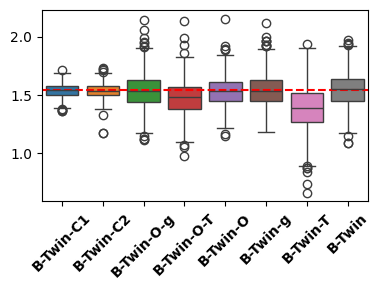}
    \label{fig:exp1_abl}
  }
  \hfill
  \subfigure{
    \includegraphics[width=0.45\linewidth]{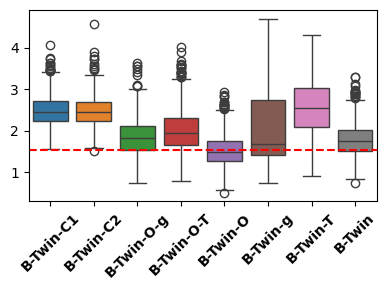}
    \label{fig:exp2_abl}
  }

  \caption{Ablation study results. Left: randomized setting; Right: high-noise, confounded setting. The red dashed line denotes the true ATT. }
  \label{fig:ablation_study}
\end{figure}

\begin{table*}[t]
\centering
\small
\begin{tabular}{lcccccc}
\hline
\textbf{Variant} 
& \multicolumn{2}{c}{\textbf{Representation Learner}} 
& \multicolumn{1}{c}{\textbf{Weight Regressor}} 
& \textbf{ATT} \\
\cline{2-3} \cline{4-5}
& Inputs & Propensity Supervision
& Inputs 
& Population \\
\hline
B-Twin-C1 
& $Y$ 
& -- 
& $Z$ 

& Treated \\

B-Twin-C2 
& $Y$ 
& T
& $Z$ 

& Treated \\

B-Twin-O-g 
& $Y$ 
& $g$ 
& $Z, g$ 

& Treated \\

B-Twin-O-T 
& $Y$ 
& T 
& $Z, g$ 

& Treated \\

B-Twin-O 
& $Y$ 
& T 
& $Z, g$ 

& Treated + Control \\

B-Twin-g 
& $Y$ 
& $g$ 
& $Z, \hat{g}$ 

& Treated \\

B-Twin-T 
& $Y$ 
& T 
& $Z, \hat{g}$ 

& Treated \\

B-Twin 
& $Y$ 
& T
& $Z, \hat{g}$ 

& Treated + Control \\
\hline
\end{tabular}
\caption{Ablation study variants. For each configuration, We report the inputs to the representation learning block and the weight regression block, the use of true or estimated propensity scores, and the population used to compute the ATT.}
\label{tab:ablation_variants}
\end{table*}

\paragraph{Baselines without Propensity Balancing.}
\textbf{B-Twin-C1} learns latent representations solely from outcome trajectories \(Y\) using the reconstruction loss \(L_r\). Counterfactuals are constructed via convex combinations of control units, similarly to the SyncTwin framework, but using dense layers instead of recurrent architectures.

\textbf{B-Twin-C2} extends \textbf{B-Twin-C1} by adding a propensity estimation head trained jointly with the reconstruction loss (\(L_r + \beta L_p\)), while still enforcing convex matching weights. This variant isolates the effect of learning propensity scores without relaxing the convexity constraint.

\paragraph{Oracle Variants.}
To assess the upper bound of performance, we consider oracle variants with access to the true propensity scores:
\begin{itemize}
    \item \textbf{B-Twin-O-g}: Uses the true propensity score \(g\) for both supervision and weights regression.
    \item \textbf{B-Twin-O-T}: Supervises the representation using treatment assignment \(T\) while the true \(g\) is used in the weight regressor; ATT is computed using treated units only.
    \item \textbf{B-Twin-O}: is similar to \textbf{B-Twin-O-T} except that ATT is estimated using contributions from both treated and control units.
\end{itemize}

\paragraph{Proposed Method.}
\textbf{B-Twin} is our full proposed model, which relies on estimated propensity scores \(\hat{g}\):
\begin{itemize}
    \item \textbf{B-Twin-g}: Trains the representation block using the true \(g\) but performs matching using estimated propensities \(\hat{g}\).
    \item \textbf{B-Twin-T}: Learns both latent representations and \(\hat{g}\) from outcomes \(Y\) and treatment \(T\), and computes ATT using treated units only.
    \item \textbf{B-Twin}: is similar to \textbf{B-Twin-T} except that ATT is estimated using contributions from both treated and control units.
\end{itemize}

Results in Figure~\ref{fig:ablation_study} demonstrate the advantages of the full B-Twin model. As expected, the oracle variants, particularly \textbf{B-Twin-O}, achieve the most accurate and stable estimates, with confidence intervals tightly centered around the true ATT, providing an upper bound on achievable performance and supporting the proposed balancing formulation.
Importantly, the full \textbf{B-Twin} model closely matches this oracle performance despite relying solely on estimated propensity scores, indicating that accurate treatment effect recovery does not require access to true propensities.

Examining \textbf{B-Twin-C1} and \textbf{B-Twin-C2}, which can be viewed as dense-layer analogues of SyncTwin, we observe improved estimation stability relative to SyncTwin (see figure \ref{fig:att_results} (c)), suggesting that dense architectures are better suited for high-frequency time series than recurrent models in this setting. However, their inferior performance compared to B-Twin highlights the importance of replacing classical convexity constraints with explicit propensity-based balancing. Overall, these results confirm that the combination of learned representations and propensity-driven balancing provides better treatment effect estimation. 

\subsubsection*{Hyperparameters analysis } 

We further examine the sensitivity of B-Twin to the hyperparameters \( \beta \), \( \gamma \), and \( \lambda \) on the same simulated dataset, which includes hidden confounding under both constant and heterogeneous treatment effects.
Sensitivity is evaluated using the mean error of individual treatment effect (ITE) estimates, which serves as a proxy for ATT error since our estimator aggregates individual counterfactuals.

\begin{figure}[t]
  \centering

  \subfigure[]{
    \includegraphics[width=0.45\linewidth]{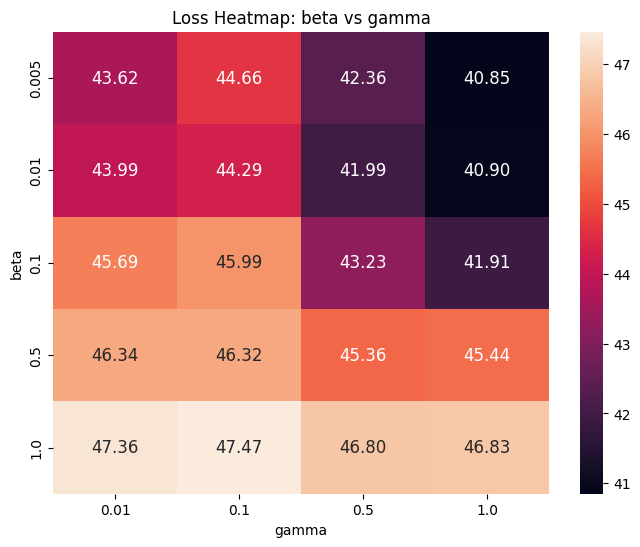}
    \label{fig:hyperparamters_sub1}
  }
  \hfill
  \subfigure[]{
    \includegraphics[width=0.45\linewidth]{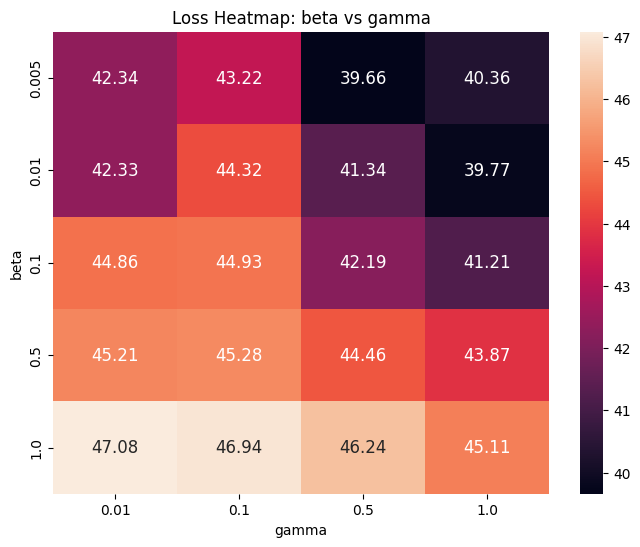}
    \label{fig:hyperparamters_sub2}
  }

  \subfigure[]{
    \includegraphics[width=0.45\linewidth]{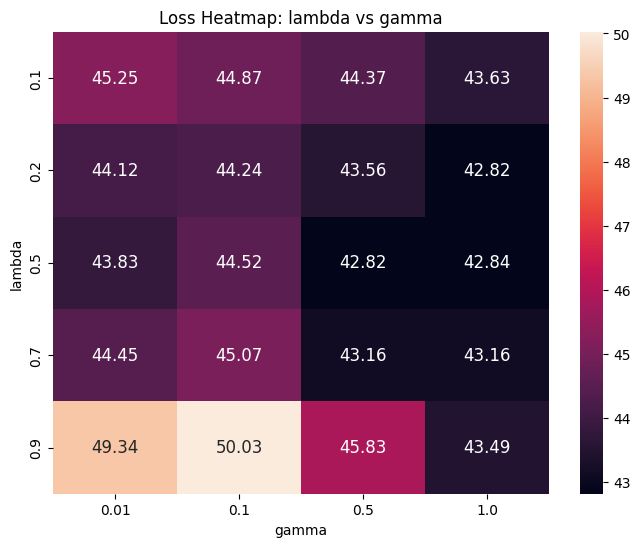}
    \label{fig:hyperparamters_sub3}
  }
  \hfill
  \subfigure[]{
    \includegraphics[width=0.45\linewidth]{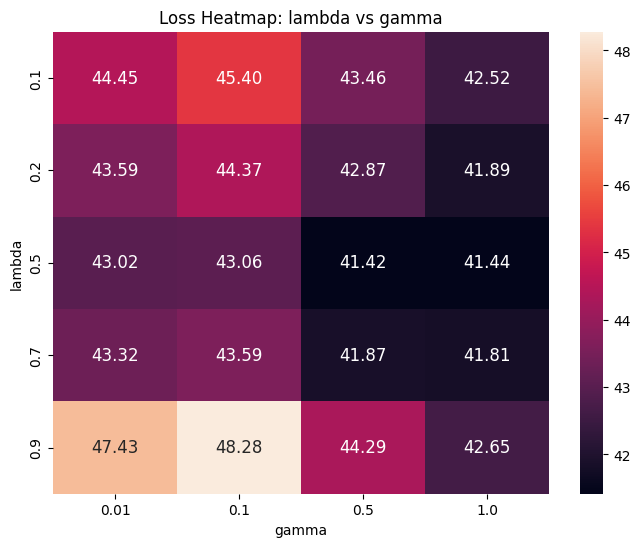}
    \label{fig:hyperparamters_sub4}
  }

  \caption{ Results on the simulated dataset for simulation 1, cases (c) and (d) in table \ref{tab:simulation_params}, showing the sensitivity of ITE error to hyperparameters $\beta$, $\gamma$, and $\lambda$. }
   \label{fig:hyperparamters_results}
\end{figure}

The resulting heatmaps in figure~\ref{fig:hyperparamters_results} indicate that performance is largely stable across a wide range of hyperparameter values. In particular, the differences between optimal and suboptimal configurations are marginal, suggesting limited sensitivity to precise tuning. Notably, although the selected hyperparameters (\( \beta = 0.005 \), \( \gamma = 0.1 \), \( \lambda = 0.7 \)) are not globally optimal, they still yield strong performance across both simulation settings.

Examining the interaction between \( \lambda \) and \( \gamma \), we observe that estimation error remains relatively stable over broad regions of the parameter space, further indicating robustness to hyperparameter choice. Overall, these results suggest that selecting hyperparameters to ensure stable convergence and decreasing loss is sufficient to obtain reliable performance, as this promotes the construction of a well-behaved latent space.

Finally, we note that hyperparameter selection via placebo experiments is feasible in practice. In preliminary experiments, we observed that hyperparameters chosen to minimize estimated treatment effects on placebo data also led to low estimation error on real datasets. While we do not report these results in detail, this suggests a practical tuning strategy for applied settings.
\subsection{Real Data}
Across all real-data experiments presented in the following sections, we restrict the comparison to neural methods, including \textbf{B-Twin}, \textbf{SyncTwin}, \textbf{TARNet}, \textbf{CFRNet}, \textbf{DragonNet}, and \textbf{BCAUSS}. These approaches are the most relevant to our setting, as they operate at the individual level and are designed to scale to large, high-dimensional time series data. Classical methods such as Synthetic Control (SC) are therefore excluded from direct comparison, as individual-level estimation is computationally infeasible in these settings. SC runtimes are nevertheless reported for reference.
Figures report bootstrap confidence intervals for the ATT (left panel)  computed using $n = 100$ resamples,  and training times normalized by dataset size on a logarithmic scale (right panel). 
\subsubsection{MIMIC-III Semi-Synthetic Study}

We evaluate our approach on a semi-synthetic dataset constructed from the MIMIC-III \citep{johnson2016mimic} database, focusing on patients admitted to the intensive care unit (ICU). The outcome of interest is the mean arterial pressure (MAP), and the treatment corresponds to the administration of vasopressors.

Clinical measurements in MIMIC-III are irregularly sampled, and patients exhibit heterogeneous lengths of stay. To obtain comparable time series, we restrict the analysis to patients with an ICU stay of at least 72 hours and interpolate missing measurements to produce regularly sampled trajectories. We retain the first 72 hours of observations following ICU admission for each patient.

To construct a controlled yet realistic evaluation setting with known ground truth, we generate a semi-synthetic treatment assignment in a placebo framework. Specifically, treatment probabilities are induced by applying a logistic function of patient age, introducing selection bias between treated and untreated individuals while preserving the original outcome trajectories. Treatment is assumed to begin at time \( t_0 = 60 \). Only patients who did not receive vasopressors during the observation window are included, ensuring that the true treatment effect is zero.

\begin{figure}[t]
    \centering
    \includegraphics[width=\linewidth]{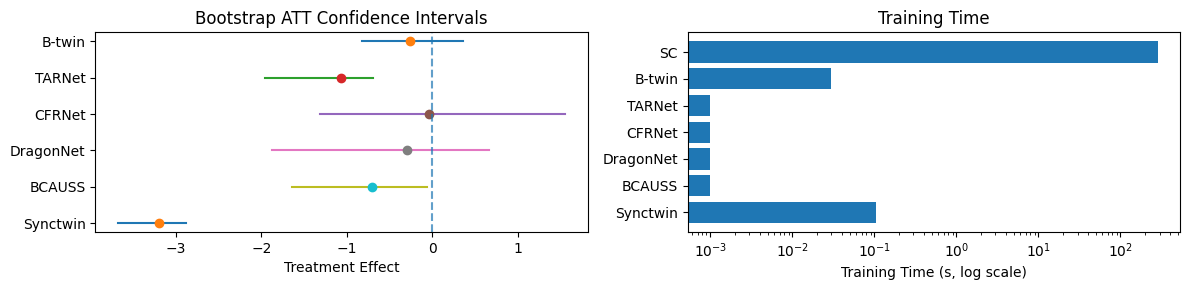}
    \caption{ATT estimation on the semi-synthetic MIMIC-III dataset.
    % Left: Bootstrap confidence intervals for the ATT based on $n=100$ resamples.
    % Right: Training times normalized by dataset size.
    % The synthetic control (SC) runtime is included for reference; however, individual-level SC estimation is computationally infeasible at this scale.
    }
    \label{fig:boostrap_and_training_time_mimic}
\end{figure}
Figure~\ref{fig:boostrap_and_training_time_mimic} reports bootstrap confidence intervals for the ATT on the semi-synthetic MIMIC-III dataset, where the true treatment effect is zero (indicated by the dashed vertical line).

The methods exhibit markedly different behaviors. B-Twin produces a confidence interval that overlaps zero and remains relatively narrow compared to several neural baselines, indicating stable ATT estimates under bootstrap resampling. TARNet and BCAUSS yield confidence intervals that are shifted away from zero, suggesting biased ATT estimates in this setting. CFRNet and DragonNet exhibit wider intervals that overlap zero, reflecting higher estimation variability.

SyncTwin shows a strongly negative ATT estimate with a confidence interval that does not overlap zero, indicating substantial bias under the induced selection mechanism. 

The right panel reports training times on a logarithmic scale. Synthetic Control is orders of magnitude more computationally expensive and is included only for reference. All neural methods scale efficiently, with TARNet, CFRNet, DragonNet, and BCAUSS being the fastest due to their outcome-regression structure. B-Twin incurs a moderate additional computational cost relative to these baselines, reflecting the explicit matching and balancing steps, while remaining substantially faster than both SyncTwin and Synthetic Control.

Overall, the results highlight a trade-off between computational efficiency and estimation behavior. B-Twin achieves ATT estimates that overlap the ground truth while maintaining reasonable uncertainty and scalability, suggesting that combining learned representations with explicit balancing can improve stability in this setting.

\subsubsection{Real-World Electricity Challenge Dataset}
In this section, we evaluate our method on a real-world electricity consumption dataset provided by Sowee, a subsidiary of \href{https://www.edf.fr/}{EDF}. The data were collected during a behavioral energy-saving challenge conducted between 2021 and 2022. In this challenge, participating households were offered a €50 incentive if they reduced their electricity consumption by at least 5\% between two consecutive winter periods.
The dataset contains 7,924 individuals with 302 days of daily electricity consumption data spanning both pre- and post-intervention periods (November 2021–March 2022 and November 2022–March 2023). The treatment corresponds to voluntary participation in the energy-saving challenge, while the outcome of interest is daily electricity consumption. To detect selection bias in the treatment adoption, we rely exclusively on pre-treatment consumption trajectories which  serve as proxies for behavioral confounding factors such as habits, routines, and baseline energy usage.
We first evaluate the methods on a semi-synthetic version of this dataset with a known ground-truth treatment effect. We then report results on the full real dataset for illustration.

\subsubsection*{Semi-Synthetic Placebo Dataset (True Effect = 0)}

Since the real dataset does not provide ground-truth treatment effects, we construct a semi-synthetic placebo dataset based on the Sowee data. Treatment assignment is generated artificially to induce selection bias while preserving the original outcome trajectories, ensuring that the true treatment effect remains zero. This setting allows us to assess whether methods spuriously detect treatment effects in the presence of induced selection bias.

\begin{figure}[t]
    \centering
    \includegraphics[width=\linewidth]{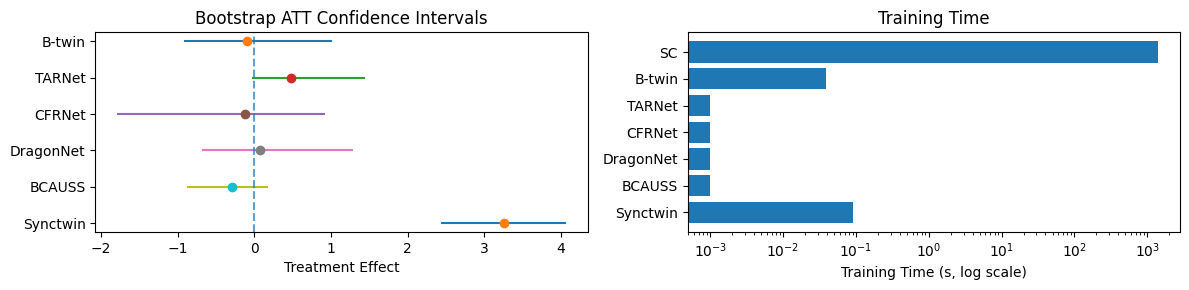}
   \caption{ATT estimation on the semi-synthetic Sowee dataset.
}

    \label{fig:boostrap_and_training_time_sowee}
\end{figure}
Figure~\ref{fig:boostrap_and_training_time_sowee} reports bootstrap confidence intervals for the ATT on the semi-synthetic Sowee dataset, where the true treatment effect is zero (indicated by the dashed vertical line).

The left panel shows substantial differences across methods. B-Twin yields an ATT estimate close to zero with a confidence interval that overlaps the null effect, indicating stable estimation under bootstrap resampling. CFRNet and DragonNet also produce estimates centered near zero, though with wider confidence intervals, reflecting higher variability. In contrast, TARNet exhibits a positive bias, with its confidence interval shifted away from zero. BCAUSS produces a slightly negative estimate with an interval that partially overlaps zero.

SyncTwin exhibits a large positive ATT estimate and a confidence interval that does not overlap the null effect, indicating a strong spurious effect in this placebo setting.

The right panel reports training times on a logarithmic scale. Synthetic Control is several orders of magnitude more computationally expensive. Outcome-regression neural models (TARNet, CFRNet, DragonNet, and BCAUSS) are the fastest. B-Twin incurs a higher computational cost than these baselines due to its explicit matching and balancing steps, but remains substantially faster than Synthetic Control.

Overall, this placebo experiment highlights differences in robustness to induced
selection bias. While several neural baselines recover ATT estimates near zero,
their uncertainty varies substantially. B-Twin consistently produces estimates
centered on the null effect while maintaining explicit matching-based
counterfactual construction.

\subsubsection*{The Real Dataset}

{We now estimate the ATT on the original Sowee dataset. Since the ground truth is unknown, the following results should be interpreted as an illustration of method behavior on real observational data rather than as validation of the true treatment effect. Agreement across methods and consistency with expected behavior provide qualitative checks only.}

\begin{figure}[t]
    \centering
    \includegraphics[width=\linewidth]{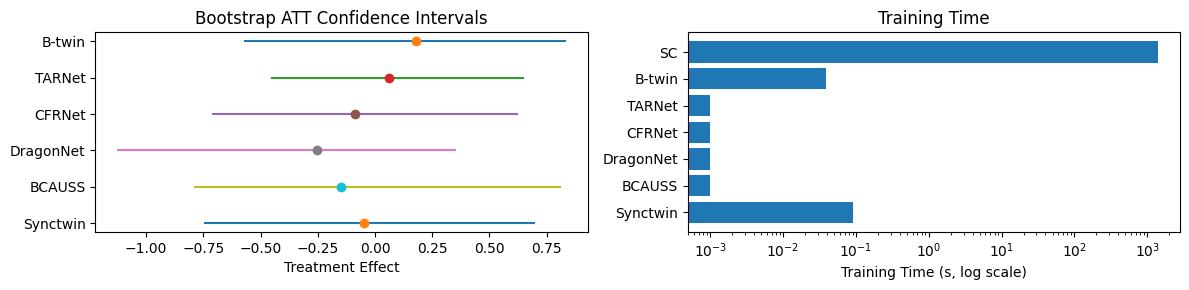}
    \caption{ATT estimation on the full Sowee dataset. Treatment effect is unknown.}
    \label{fig:boostrap_and_training_time_sowee_real}
\end{figure}

Figure~\ref{fig:boostrap_and_training_time_sowee_real} reports bootstrap confidence intervals for the ATT on the full Sowee dataset, where the true treatment effect is unknown. The left panel shows that all methods produce ATT estimates concentrated around small values, with confidence intervals largely overlapping the range between approximately $-1$ and $0.75$. This indicates a broad agreement across methods that the average effect of the intervention is limited in magnitude.

The right panel reports training times on a logarithmic scale. As before, neural outcome-regression models are the fastest, while synthetic control is several orders of magnitude slower and computationally impractical at the individual level for a dataset of this size. B-Twin occupies an intermediate position: it remains orders of magnitude faster than synthetic control while retaining an explicit matching-based estimation strategy that constructs counterfactuals from observed data rather than relying on outcome extrapolation.

Overall, these results suggest broad agreement across methods that the average
effect of the intervention is small. B-Twin achieves comparable
estimates while remaining interpretable and more scalable than {individual-level} synthetic-control, making it practical for
large real-world datasets.
\section{{Limitations}}
{
While B-Twin provides a flexible and interpretable framework for counterfactual estimation under hidden confounding, its validity relies on the latent recoverability assumption introduced in Section~3. In particular, B-Twin assumes that the treatment-relevant hidden confounding structure can be approximately recovered from pre-treatment trajectories. If the hidden confounders do not leave a sufficiently informative signature in the observed history, then the learned representation may fail to preserve the treatment-relevant information required for balancing, and the learned propensity score $\hat g(Z)$ may provide a poor approximation of the latent propensity $g(W)$. Consequently, B-Twin should not be interpreted as identifying causal effects under arbitrary hidden confounding, but rather under hidden confounding that is partially recoverable from observed pre-treatment dynamics.
In practice, the quality of the method depends jointly on reconstruction accuracy and propensity estimation accuracy. Poor reconstruction quality, misspecified latent representations, or inaccurate propensity estimation may lead to imperfect balancing and biased counterfactual estimates. In addition, very small propensity score values (i.e., limited overlap between treated and control populations) may introduce instability in both training and ATT estimation due to the sensitivity of importance weighting.
Finally, B-Twin introduces additional computational overhead compared to standard outcome-regression methods because of the explicit matching step. As a result, in settings where the data-generating process is approximately stationary, extrapolation is reliable, and interpretability is not required, simpler outcome-regression approaches may be preferable.
}
\section{Conclusion}
We proposed \textbf{B-Twin}, a neural framework for estimating treatment effects from time series data in the presence of hidden confounding. B-Twin combines representation learning with individual-level propensity score estimation to construct interpretable, synthetic-control-style counterfactuals based on explicit matching rather than outcome extrapolation.

Across synthetic, semi-synthetic, and real-world experiments, B-Twin exhibits stable performance under hidden confounding, noise, and non-stationary data-generating processes. Its robustness stems from anchoring counterfactual estimation to observed control trajectories, which mitigates the sensitivity of outcome-regression-based approaches when temporal dynamics shift between pre- and post-treatment periods.

{By decoupling representation learning from matching, B-Twin provides a principled alternative to outcome-regression approaches for counterfactual estimation in time series. Its strengths are most pronounced in settings where hidden confounding and non-stationary dynamics limit the reliability of extrapolation-based methods, and where interpretability through explicit weighting is desirable.}

{Future work will explore improved representation learning for capturing complex latent structure and extend the framework to model heterogeneous and time-varying treatment effects more explicitly across individuals.}

\bibliography{main}
\bibliographystyle{tmlr}
\newpage
\appendix

\section{Handling Staggered Treatment Adoption}
\label{appendix:staggered}

{\paragraph{Treatment process.}
In the staggered adoption setting, treatment may start at different times $t_i$ across individuals. We define the treatment process
\[
A_{i,t} = \mathbf 1\{t \ge t_i\},
\]
where $t_i$ denotes the treatment onset time for individual $i$. For untreated individuals, pseudo-treatment times are sampled randomly from observed treatment-times to construct comparable pre-treatment windows. This allows pre-treatment windows to be defined consistently across treated and control units in the staggered setting.}

{\paragraph{Pre-treatment encoding.}
The encoder receives both the outcome trajectory $Y_{i,t}$ and the treatment process $A_{i,t}$. To ensure that the learned representation only captures pre-treatment information, the encoder operates on the masked trajectory
\[
Y_{i,t}(1-A_{i,t}),
\]
which removes post-treatment observations from the representation learning stage while preserving the temporal structure of the sequence.}

{\paragraph{Latent representation and matching.}
Using the masked outcome trajectories, the encoder produces latent representations $Z_i$ summarizing the pre-treatment dynamics of each individual. Matching weights, including the reverse matching weights
$(\hat b_{ji})_{j \in \mathcal C, i \in \mathcal T}$ introduced in the main text, are then learned exactly as in the synchronized setting. For a treated unit $i \in \mathcal{T}$, the counterfactual outcome is constructed as
\[
\hat Y_{i,t}^0
=
\sum_{j \in \mathcal C}
\hat b_{ij} Y_{j,t}.
\]
}

{Thus, staggered adoption primarily modifies the representation learning stage through the masking process, while the matching mechanism itself remains unchanged.}

{\paragraph{Fixed treatment time $t_0$.}
When all treated units receive treatment at the same time $t_0$, the estimator used in the main text is
\begin{align}
\hat{\mathrm{ATT}} =
\frac{1}{n(L-t_0+1)}
\sum_{t=t_0}^{L}
\Bigg[
\sum_{i \in \mathcal{T}}
\left(
\frac{\hat g(Z_i)}{\hat{ P}_{11}} Y_{i,t}
-
\sum_{j \in \mathcal{C}} \hat b_{ij}\frac{\hat g(Z_j)}{\hat{ P}_{10}} Y_{j,t}
\right)
+
\sum_{j \in \mathcal{C}}
\left(
\sum_{i \in \mathcal{T}} \hat b_{ji}\frac{\hat g(Z_i)}{\hat{ P}_{11}} Y_{i,t}
-
\frac{\hat g(Z_j)}{\hat{ P}_{10}} Y_{j,t}
\right)
\Bigg].
\label{eq:estimator_fixed_appendix}
\end{align}
Here $n = |\mathcal{T} \cup \mathcal{C}|$, and $\hat{ P}_{11}$ and $\hat{P}_{10}$ are defined in Equation~(\ref{eq:probability_T}).}

{\paragraph{Staggered treatment times.}
When treatment occurs at different times $t_i$ across treated units, the estimator can be adapted by averaging each treated unit over its own post-treatment window. This gives}
{
\begin{align}
\hat{\mathrm{ATT}} =
\frac{1}{n}
\Bigg[
&\;
\sum_{i \in \mathcal{T}}
\frac{1}{L-t_i+1}
\sum_{t=t_i}^{L}
\left(
\frac{\hat g(Z_i)}{\hat{ P}_{11}} Y_{i,t}
-
\sum_{j \in \mathcal{C}} \hat b_{ij}\frac{\hat g(Z_j)}{\hat{ P}_{10}} Y_{j,t}
\right)
\nonumber\\
&\;+
\sum_{j \in \mathcal{C}}
\frac{1}{L-t_{\mathcal C}+1}
\sum_{t=t_{\mathcal C}}^{L}
\left(
\sum_{i \in \mathcal{T}} \hat b_{ji}\frac{\hat g(Z_i)}{\hat{ P}_{11}} Y_{i,t}
-
\frac{\hat g(Z_j)}{\hat{ P}_{10}} Y_{j,t}
\right)
\Bigg].
\label{eq:estimator_staggered_appendix}
\end{align}}

{
The only difference with the fixed-$t_0$ case is that each treated unit $i$ is evaluated from its own treatment time $t_i$. For the control contribution, we use a common reference time
\[
t_{\mathcal C} = \max_{i \in \mathcal{T}} t_i,
\]
so that control units are evaluated over a post-treatment period comparable across all treated units.}

{\paragraph{Interpretation.}
The staggered extension therefore requires only two practical modifications:
(i) Outcomes are masked using the treatment process before encoding, and
(ii) post-treatment averages are computed relative to each treated unit’s treatment time $t_i$.
The latent matching mechanism, propensity estimation procedure, and counterfactual construction remain otherwise unchanged.
}
{\paragraph{Results.} Table~\ref{tab:staggered_treatment_results} shows the results of B-Twin on the simulated dataset generated using the second simulation model, across the 4 settings given in Table~\ref{tab:simulation_params}. In this staggered setting, treated group is divided into three cohorts, with treatment times $t_0$, $t_1$ and $t_2$.  The table reports the mean ATT estimate and standard deviation obtained across 100 Monte Carlo simulations for each setting. Overall, the results suggest that the proposed masking-based extension remains stable in this simplified staggered-treatment setting, with mean ATT estimates remaining close to the true ATT across all settings. }

\begin{table*}[t]
\centering
\small
\begin{tabular}{|l|c|c|c|c|}
\hline
\textbf{ATT} 
& \multicolumn{4}{|c|}{\textbf{Settings}} \\
\cline{2-5}
& (a) &(b)& (c) & (d)  \\
\hline

True ATT 
& 1.54
&1.54 
& 1.54 
& 1.98 \\

B-Twin ATT 
& (1.49,0.97)
&(1.38,0.08)
&( 1.86,1.11)
& (2.22,1.16)\\
\hline
\end{tabular}

\caption{
ATT estimation on the second toy dataset with staggered treatment adoption.
Reported values correspond to (mean ATT estimate, standard deviation) across 100 Monte Carlo simulations.
The full time series length is $L = 168$, and treatment is introduced at
$t_0=84$, $t_1=126$, and $t_2=142$, forming three cohorts of treated individuals.
}
\label{tab:staggered_treatment_results}
\end{table*}
{\section{Proof of Proposition 2}
\label{appendix:prop2_proof}
}
For completeness, we restate Proposition~2.

\begin{proposition}[Approximation of the latent propensity]
Under the assumptions above, suppose that:
(i) the reconstruction error satisfies
\[
\|Y^{\mathrm{pre}} - f_{\theta}(f_{\phi}(Y^{\mathrm{pre}}))\| \le \varepsilon_Y,
\]
and (ii) the propensity model satisfies
\[
|\hat g(Z)-\mathbb P(T=1 \mid Z)| \le \varepsilon_P.
\]
Then,
\[
|\hat g(Z)-g(W)|
\leq
\varepsilon_P
+
2L_g \varepsilon_W
+
2L_g L_h \varepsilon_Y.
\]
\end{proposition}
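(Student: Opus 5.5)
The plan is to follow the two-step structure announced in the proof sketch, using the pivot quantity $\tilde W := h(f_\theta(Z))$, where $Z=f_\phi(Y^{\mathrm{pre}})$. This is a deterministic function of $Z$ that serves as a surrogate for $W$. Everything is then assembled with the triangle inequality
\[
|\hat g(Z)-g(W)| \le |\hat g(Z)-g(\tilde W)| + |g(\tilde W)-g(W)|.
\]

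\textbf{Step 1 (surrogate bound).} By the latent recoverability assumption, $\|W-h(Y^{\mathrm{pre}})\|\le\varepsilon_W$. By Lipschitz continuity of $h$ and hypothesis (i), $\|h(Y^{\mathrm{pre}})-h(f_\theta(Z))\|\le L_h\varepsilon_Y$. Combining the two gives $\|W-\tilde W\|\le \varepsilon_W+L_h\varepsilon_Y$. Applying the Lipschitz property of $g$ then yields
\[
|g(W)-g(\tilde W)|\le L_g\varepsilon_W+L_gL_h\varepsilon_Y .
\]
This step is routine.

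\textbf{Step 2 (propensity bound).} By hypothesis (ii), $|\hat g(Z)-g(\tilde W)|\le \varepsilon_P + |\mathbb P(T=1\mid Z)-g(\tilde W)|$. To handle the last term, I will use the DAG of Figure~\ref{fig:dag}: $Z$ is a function of $Y^{\mathrm{pre}}$, and $T$ depends on $Y^{\mathrm{pre}}$ only through $W$, so $T\perp\!\!\!\perp Z\mid W$. The law of iterated expectations then gives
\[
\mathbb P(T=1\mid Z)=\mathbb E[g(W)\mid Z].
\]
Since $\tilde W$ is $\sigma(Z)$-measurable, $g(\tilde W)=\mathbb E[g(\tilde W)\mid Z]$. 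By Jensen's inequality,
\[
|\mathbb E[g(W)\mid Z]-g(\tilde W)|\le \mathbb E[|g(W)-g(\tilde W)|\mid Z].
\]
The pointwise bound from Step 1 holds almost surely, because the assumptions are stated for every realization. Hence this conditional expectation is at most $L_g\varepsilon_W+L_gL_h\varepsilon_Y$, and so $|\hat g(Z)-g(\tilde W)|\le \varepsilon_P+L_g\varepsilon_W+L_gL_h\varepsilon_Y$.

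\textbf{Step 3 (assembly).} Adding the bounds of Steps 1 and 2 in the triangle inequality gives exactly $\varepsilon_P+2L_g\varepsilon_W+2L_gL_h\varepsilon_Y$. This explains the factor $2$: the surrogate error is paid once directly, and once inside the conditional expectation.

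\textbf{Main obstacle.} The delicate point is Step 2, for two reasons. First, I must justify $T\perp\!\!\!\perp Z\mid W$. This requires the encoder to be a deterministic map; for the VAE I take the mean encoding, or else I condition on the encoder noise, which is independent of everything. It also requires that $Y^{\mathrm{pre}}$ carry no information about $T$ beyond $W$, which I will read off the DAG, where the only path from $Y^{\mathrm{pre}}$ to $T$ goes through $W$. Second, the recoverability and reconstruction bounds must hold almost surely, not merely in expectation, so that they can be pushed inside the conditional expectation. I will state explicitly that Assumptions~4--5 and (i) are understood pointwise.
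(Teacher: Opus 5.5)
Your proposal is correct and follows the paper's proof essentially step for step. It uses the same pivot $h(f_\theta(Z))$, the same recoverability-plus-Lipschitz bound, the same identity $\mathbb P(T=1\mid Z)=\mathbb E[g(W)\mid Z]$ obtained from $T\perp\!\!\!\perp Z\mid W$ followed by Jensen, and the same triangle-inequality assembly that produces the factor $2$. Your remarks that the encoder must be deterministic (or randomized independently of everything) and that the recoverability and reconstruction bounds must hold almost surely simply make explicit what the paper assumes tacitly.
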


\begin{proof}
\textbf{Step 1: Approximation of the latent propensity via reconstruction.}\\
Define $\hat W := h(f_\theta(Z))$. Then:
\[
|g(W) - g(\hat W)|
\le
L_g \|W - \hat W \|.
\]
By the triangle inequality:
\[
\|W - \hat W\|
\le
\|W- h(Y^{\mathrm{pre}})\|
+
\|h(Y^{\mathrm{pre}}) - h(f_\theta(Z))\|.
\]
Using Assumption 4:
\[
\|W - h(Y^{\mathrm{pre}})\| \le \varepsilon_W.
\]
Using Lipschitz continuity of $h$ and the reconstruction error:
\[
\|h(Y^{\mathrm{pre}}) - h(f_\theta(Z))\|
\le
L_h \|Y^{\mathrm{pre}} - f_\theta(Z)\|
\le
L_h \varepsilon_Y.
\]
Thus:
\[
\|W - \hat W\|
\le
\varepsilon_W + L_h \varepsilon_Y,
\]
and therefore:
\[
|g(W) - g(h(f_\theta(Z)))|
\le
L_g \varepsilon_W + L_g L_h \varepsilon_Y.
\]
\textbf{Step 2: Approximation of the learned propensity.}\\
We decompose:
\[
|\hat g(Z) - g(h(f_\theta(Z)))|
\le
|\hat g(Z) - \mathbb P(T=1 \mid Z)|
+
|\mathbb P(T=1 \mid Z) - g(h(f_\theta(Z)))|.
\]
The first term is bounded by $\varepsilon_P$ from the propensity error. Define:
\[
\varepsilon_Z := |\mathbb P(T=1 \mid Z) - g(h(f_\theta(Z)))|.
\]
Thus:
\[
|\hat g(Z) - g(h(f_\theta(Z)))|
\le
\varepsilon_P + \varepsilon_Z.
\]
\textbf{Bounding the representation error term.}\\
Recall that
\[
\varepsilon_Z
=
\left|\mathbb P(T=1 \mid Z) - g(h(f_\theta(Z)))\right|.
\]
Since $T \in \{0,1\}$, we have:
\[
\mathbb P(T=1 \mid Z) = \mathbb{E}[T \mid Z].
\]
Applying the law of iterated expectations:
\[
\mathbb{E}[T \mid Z]
=
\mathbb{E}\big[\mathbb{E}[T \mid W, Z] \mid Z\big].
\]
Since treatment assignment depends on the latent confounder \(W\), and the representation \(Z\) is constructed solely from pre-treatment outcomes generated through \(W\), we have \(T  \perp\!\!\!\perp Z \mid W\). Therefore:
\[
\mathbb{E}[T \mid W, Z] = \mathbb{E}[T \mid W] = g(W).
\]
Therefore:
\[
\mathbb{E}[T \mid Z]
=
\mathbb{E}[g(W) \mid Z].
\]
Hence,
\[
\mathbb P(T=1 \mid Z) = \mathbb{E}[g(W)\mid Z].
\]
Given this, we have:
\[
\varepsilon_Z
=
\left|\mathbb{E}[g(W)\mid Z] - g(h(f_\theta(Z)))\right|.
\]
Rewriting:
\[
\varepsilon_Z
=
\left|\mathbb{E}[g(W) - g(h(f_\theta(Z))) \mid Z]\right|.
\]
By Jensen's inequality:
\[
\varepsilon_Z
\le
\mathbb{E}\left[\,|g(W) - g(h(f_\theta(Z)))| \mid Z \right].
\]
Using the previously established bound:
\[
|g(W) - g(h(f_\theta(Z)))|
\le
L_g \varepsilon_W + L_g L_h \varepsilon_Y,
\]
we obtain:
\[
\varepsilon_Z
\le
L_g \varepsilon_W + L_g L_h \varepsilon_Y.
\]

\textbf{Final bound.} \\We combine:

\[
|\hat g(Z) - g(W)|
\le
|\hat g(Z) - g(h(f_\theta(Z)))|
+
|g(h(f_\theta(Z))) - g(W)|.
\]
Using the previous bounds:
\[
|\hat g(Z) - g(h(f_\theta(Z)))|
\le
\varepsilon_P + L_g \varepsilon_W + L_g L_h \varepsilon_Y,
\]
\[
|g(h(f_\theta(Z))) - g(W)|
\le
L_g \varepsilon_W + L_g L_h \varepsilon_Y,
\]
we obtain:
\[
|\hat g(Z) - g(W)|
\le
\varepsilon_P
+
2L_g \varepsilon_W
+
2L_g L_h \varepsilon_Y.
\]
\end{proof}

\end{document}